 \journalname{Mathematical Programming}
\let\cl@chapter\undefined
\newcommand{\R}{\mathbb{R}}
\newcommand{\N}{\mathbb{N}}
\newcommand*\diff{\mathop{}\!\mathrm{d}}
\newcommand{\abs}[1]{\left\lvert #1 \right\rvert}
\renewcommand{\epsilon}{\varepsilon}
\renewcommand{\l}{\ell} 
\newcommand{\ext}{\alpha} 
\newcommand{\fac}{c} 
\newcommand{\capo}{\nu^-} 
\newcommand{\capi}{\nu^+} 
\newcommand{\capmin}{\nu_{\min}} 
\renewcommand{\b}{b} 
\newcommand{\bwave}{\tilde{b}} 
\newcommand{\dwave}{\tilde{d}} 
\newcommand{\Ewave}{\tilde{E}} 
\newcommand{\etamin}{\eta_{\min}}
\newcommand{\Ve}{V}
\newcommand{\Vz}{\bar V}
\newcommand{\vz}{\bar v}
\newcommand{\paths}{\mathcal{P}}
\DeclareMathOperator{\VI}{VI}
\DeclareMathOperator{\SOL}{SOL}
\DeclareMathOperator*{\argmax}{arg\,max}
\renewcommand\labelenumi{(\roman{enumi})}
\renewcommand\theenumi\labelenumi
\title{Nash Flows Over Time with Spillback and Kinematic Waves\thanks{This research was funded by the Deutsche Forschungsgemeinschaft (DFG, German Research Foundation) under Germany's Excellence Strategy The Berlin Mathematics Research Center MATH+ (EXC-2046/1, project ID: 390685689) and the Research Training Group 2236 UnRAVeL.}
}
\author{Leon Sering        \and
        Laura Vargas Koch 
}
\institute{L. Sering (ORCID: 0000-0003-2953-1115) \at
              Stra\ss{}e des 17. Juni 136, 10623 Berlin, Germany  \\
              \email{sering@math.tu-berlin.de}           
           \and
           L. Vargas Koch (ORCID: 0000-0002-7499-5958) \at
             Kackertstr. 7, 52072 Aachen, Germany\\
             \email{laura.vargas@oms.rwth-aachen.de}
}
\date{}
\begin{document}

\maketitle

\begin{abstract}Modeling traffic in road networks is a widely studied but challenging problem, especially under the
assumption that drivers act selfishly. A common approach is the \emph{deterministic queuing model}, for which the
structure of dynamic equilibria has been studied extensively in the last couple of years. The basic idea is to model
traffic by a continuous flow that travels over time through a network, in which the arcs are endowed with transit times
and capacities. Whenever the flow rate exceeds the capacity the flow particles build up a queue. So far it was not
possible to represent spillback or kinematic waves in this model. By introducing a storage capacity arcs can become
full, and thus, might block preceding arcs, i.e., spillback occurs. Furthermore, we model kinematic waves by upstream
moving flows over time representing the gaps between vehicles. We carry over the main results of the original model to
our generalization, i.e., we characterize \emph{Nash flows over time} by sequences of particular static flows, so-called
\emph{spillback thin flows}. Furthermore, we give a constructive proof for the existence of dynamic equilibria, which
suggests an algorithm for their computation. This solves an open problem stated by~\cite{koch2010nash}.
\keywords{flows over time \and dynamic equilibrium \and deterministic queuing model \and traffic \and spillback \and kinematic waves}

\subclass{05C21 \and 91A10}
\end{abstract}
 \section{Introduction} \label{sec:intro}
 
Urban population is rapidly growing worldwide and so is the number of vehicles in metropolitan areas. To get control of
this rising traffic volume intelligent traffic planning is of central importance. In particular it is essential to solve
many of the major traffic problems in today's cities, e.g., air and noise pollution and long travel times. A well planned traffic does not only increase the quality of life for traffic users but also benefits the economy
and environment. Improved navigation systems and the availability of massive amounts of traveling data give a huge
opportunity to optimize the infrastructure for the growing demand. This draws the attention to more realistic
mathematical traffic models and algorithmic approaches for the interplay of individual road users. %
Unfortunately, on the one hand, realistic models used in simulations are mathematically poorly understood and, on the
other hand, theoretically precise models that are mathematically well-analyzed are very simplified. Our contribution is
to extend the theoretical state of the art model by adding two crucial components: \emph{spillback} and \emph{kinematic waves}. These effects can be
observed in daily traffic situations, e.g., spillback occurs on a highway, where a bottleneck causes a long traffic jam that blocks exits
upstream, or during rush hour in a big city where a crossing is impassable due to the congestion of an intersecting
road. 
Furthermore, traffic congestions are observed to move upstream after a bottleneck is removed as vehicles have a certain reaction time to close the gap if a preceding car accelerates. In other words, the gaps between vehicles move backwards over time, which causes a wave-like motion of the congestion.
This phenomenon is therefore called \emph{kinematic waves} and it was first studied from a mathematical perspective by \cite{lighthill1955kinematic,lighthill1955kinematic2}.
It is no surprise that spillback and kinematic waves are of great interest for traffic planners and that these are core features of recent
traffic simulation tools.
Hence, introducing spillback and kinematic waves is an important step towards closing the gap between
mathematical models and simulations.
%

 \paragraph{Flows over time.}
 As there is a huge number of interacting agents in daily traffic, we do not concentrate on single entities but consider
 traffic streams instead. For this scenario, in which infinitesimally small agents travel through a network over time,
 \emph{flows over time} are an excellent mathematical description. While the game theoretical perspective of this
 problem is still in its infancy, the optimization perspective has already been studied for more than half a century. \cite{ford1958constructing} introduced a time-dependent flow model, in which flow travels over
 time through a network from a source~$s$ to a sink~$t$. Every arc of the network is equipped with a capacity, which
 limits the rate of flow using that arc, and a transit time specifying the time needed to traverse it. This model is
 widely analyzed and there are several algorithms solving different optimization problems. Ford and Fulkerson presented
 an algorithm for the maximum flow over time problem, i.e., sending as much flow as possible from~$s$ to~$t$ within a
 given time horizon. A natural extension is to search for flows over time that maximize the flow amount reaching the sink for every
 point in time simultaneously, so called \emph{earliest arrival flows}. In 1959, \cite{gale1959transient} proved
 their existence in an $s$-$t$-network and the first algorithm was presented by
 ~\cite{wilkinson1971algorithm}. 
All these problems were first considered from a discrete time perspective and only in 1996, 
\cite{fleischer1998efficient} showed that all the results and algorithms carry over to the continuous time model, which
has become the conventional perspective by now. For a nice introduction into the whole field we refer to the survey
of~\cite{skutella2009introduction}.

\paragraph{Dynamic equilibria.} Meanwhile flows over time were considered from a decentralized, game theoretical
perspective in the transportation science community; see, e.g., the book of~\cite{ranboyce96} and the article about spillback of~\cite{daganzo98spillover}. In accurate traffic scenarios it is reasonable
to expect the participants (particles) to act selfishly, i.e., to minimize their arrival times. The actual traffic is
then represented by a dynamic equilibrium, a state where no particle can reach the destination quicker by changing
its route. In this paper we consider the \emph{deterministic queuing model} to describe the arc dynamics, which is also
used in the simulation software MATSim; see \cite{horni2016multi}. In this competitive flow over time setting it is possible
that the inflow rate exceeds the capacity for some arc, which causes a queue to build up in front of the exit.
Therefore, the actual travel time of an arc consists of the transit time plus the queue waiting time. \cite{koch2010nash} characterized the structure of dynamic equilibria, called \emph{Nash flows over time}, and
showed that they consist of a number of phases, in which the in- and outflow rates of each arc are constant. Each
phase is characterized by a particular static flow together with node labels, named \emph{thin flows with resetting}.
\cite{cominetti2011existence} showed the existence and uniqueness of such thin flows, from which a Nash flow over time can be constructed and \cite{cominetti2015dynamic} extended the
existence result to networks with general inflow rate functions and to a multi-commodity setting. Moreover, \cite{cominetti2017long} examined the long term behavior of queues and were able to bound
their lengths whenever the network capacity is sufficiently large. Finally,
\cite{bhaskar2015stackelberg} analyzed different prices of anarchy in this model and \cite{sering2018multiterminal} showed how to translate the constructive results to a multi-terminal setting.

\paragraph{Our contribution.} In the \emph{Koch-Skutella-model} introduced by \cite{koch2010nash} the queues do not have
physical dimensions and can in principle be arbitrarily large. Thus, spillback cannot occur, which is a huge drawback
when considering real world scenarios. Our contribution is to extend this model such that the total amount of flow on an
arc, and thus the queue length, can be bounded. Whenever an arc is \emph{full} the inflow rate cannot exceed the outflow
rate anymore. In words of traffic: if a road is full no new vehicle can enter the street before another vehicle leaves.
If more flow aims to use a full arc, it has to queue up on a previous arc, i.e., we have \emph{spillback}. We generalize
the concept of thin flows to this spillback setting by introducing an additional node label, which we call
\emph{spillback factor}. We show that, similar to the Koch-Skutella-model, the derivatives of every Nash flow over time
with spillback form a \emph{spillback thin flow}. In reverse it is possible to compute a Nash flow over time by
extending a flow over time step by step via spillback thin flows.
We generalize this model further to capture kinematic waves.
Whenever flow leaves an arc, it takes time until the free space reaches the tail of the arc. We model this by introducing a backwards moving flow over time on each arc representing these gaps. We show that all
results for the spillback model transfer to this extension and that it satisfies the conditions for kinematic wave
models stated by the traffic scientists; see \cite{flotterod2016queueing}.

\paragraph{Outline.} In \Cref{sec:street} we motivate the spillback model via road properties and give an illustrative
example to emphasize the importance of spillback. \Cref{sec:flow_dynamics} introduces the basic notations and
concepts of the flow dynamics. In \Cref{sec:Nash_flows} we define Nash flows over time and spillback thin flows
and show their structural connection. \Cref{sec:constructing_nash_flows} is dedicated to the construction and
computation of Nash flows over time via spillback thin flows. In \Cref{sec:kinematic_wave_model}, we introduce the kinematic wave model, again by a motivation via road properties, and then show that the results for the spillback model transfer to this extension. In \Cref{sec:koch_skutella} we present the connection to already known results, especially regarding uniqueness and the price of anarchy.
Finally, in \Cref{sec:outlook}, we give a brief conclusion
and outlook on further interesting questions. Due to space restrictions some of the more technical proofs are moved to the appendix (\Cref{sec:appendix}).  

\section{From Roads to Arcs}  \label{sec:street} \paragraph{Street model with spillback.} In order to get an appropriate model with spillback (without kinematic waves for now) imagine
a street with a number of lanes~$w$, a length~$\l$, and a speed limit~$v_1$; see on the left of \Cref{fig:street_model}.
We assume the street ends in a crossing, which leads to an exit speed limit denoted by~$v_2$. The number of cars
entering or leaving the street per time unit is denoted by $f^+(\theta)$ and $f^-(\theta)$ respectively. Whenever some
cars cannot leave the road immediately a traffic jam, also called \emph{physical queue}, builds up at the end of the road. Its length at time~$\theta$ is
denoted by~$j(\theta)$ and if it equals $\l$, the street is full. A new car can only enter the street if there is enough
space on a lane at this moment. After entering it drives along the street with velocity $v_1$ until it reaches either
the end of the street or the end of the traffic jam. In the latter case, it stays in the stop-and-go traffic until it
reaches the end of the street. Whenever the following street is full the outflow is restricted, therefore, the stop-and-go
speed is reduced even further leading to longer traffic jams. This is what we call spillback.
\begin{figure}[ht]
\centering 
\includegraphics[width=\columnwidth]{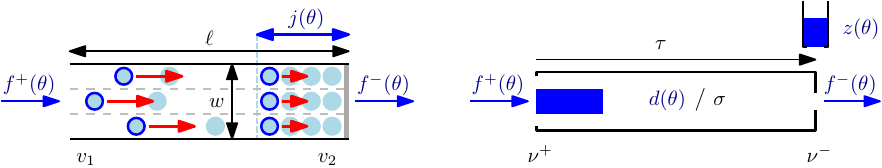} 
\caption{\emph{Left:} Model of cars using a street with~$w$~lanes, length~$\l$, speed
limit~$v_1$, and exit speed~$v_2$. \emph{Right:} Simplification we use within this paper to model a street. 
} 
\label{fig:street_model}
\end{figure}
\paragraph{Arc model.} To describe this situation mathematically we consider a directed graph. Hereby, each arc
corresponds to a street segment and every node to a crossing. In order to make the street dynamics easier to handle we
transform the  properties of the street in the following way; see on the right of \Cref{fig:street_model}: Each arc is
equipped with an inflow capacity~$\capi$ corresponding to~$w \cdot v_1$, a~free flow transit time~$\tau$ corresponding
to~$\frac{\l}{v_1}$, a storage capacity~$\sigma$ corresponding to~$\frac{\l \cdot w}{\text{car-length}}$ and an outflow
capacity~$\capo$ corresponding to~$v_2 \cdot w$. Instead of considering discrete cars we look at a continuous flow over
time that is described by an inflow rate function~$f^+$ and an outflow rate function~$f^-$. In this flow model the queue
does not have a physical length, i.e., each flow particle first traverses the arc in $\tau$ time and if the \emph{point
queue}~$z(\theta)$ is positive the particle lines up. The in- and outflow
rate are the same whether the queue has a physical length as described by the traffic jam in the street model or
consists of a point queue, since in both cases we have strict FIFO. The total amount of flow on an arc, consisting of
traversing particles and the flow in the queue, is denoted by~$d(\theta)$ and can never exceed the storage
capacity~$\sigma$. Note that the amount of flow in a physical queue is in general larger than the
amount of flow in the point queue, which is the reason why we restrict $d(\theta)$ by the storage capacity and not
$z(\theta)$. In fact, $d(\theta) = \sigma$ corresponds exactly to the case that the length of the traffic jam equals the
length of the street, i.e., the street is full. 

\paragraph{Introductory example.} To illustrate the importance of spillback we present two examples of Nash flows over
time. We consider the same network in both cases except for a different outflow capacity on arc~$e_2$. In the first
case, depicted on the left of \Cref{fig:example}, suppose that the outflow capacity is~$\capo_{e_2} = 1$. Since the
unique shortest path in the network uses arc~$e_2$ all flow particles use this path until time~$4$. At this point in
time particles located at node~$v$ that decide to use~$e_2$ will be at the end of this arc at time~$5$ and will
experience a queue of length~$6$. Hence, the total travel time along~$e_2$ is~$7$, which equals the transit time
of~$e_3$. Thus, the flow splits up: a rate of $2$ takes arc~$e_3$ and the remaining flow of rate $1$ chooses arc~$e_2$
nevertheless. Since the inflow rate of arc~$e_2$ is now~$1$, and therefore equal to the outflow rate, the queue length
stays constant at~$6$ and the total travel time from~$s$ to~$t$ remains constant at $8$ for all times.

\begin{figure}[ht]
\begin{center}
\includegraphics[width=\columnwidth]{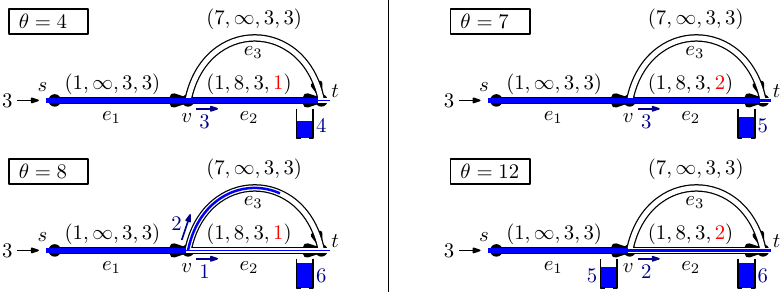} 
\caption{The arc properties are displayed in the following order $(\tau_e,\sigma_e, \capi_e,\capo_e)$. 
} 
\label{fig:example}
\end{center}
\end{figure}

For the second case, depicted on the right of \Cref{fig:example}, we consider the same network except for the outflow
capacity $\capo_{e_2} = 2$, which changes the situation drastically. As before, in the beginning the path along $e_2$ is
the unique shortest route, and thus, all particles take it. At time $7$ the arc gets full because the amount of flow in
the queue equals $5$ and $3$ additional flow units are traversing the arc at this moment. From this time onward the
inflow rate into $e_2$ is restricted to the outflow rate, namely $2$. Note that for particles located at $v$ at time~$7$
the travel time along $e_2$ equals $4$: one time unit for traversing the arc and a waiting time of $3$ (at time $8$
the amount of flow in the queue is $6$ and the particles leave the queue with a rate of $2$). Hence, for these particles
$e_2$ is still faster than~$e_3$ and since the queue will never become longer than $6$ all later particles will stick to
the bottom route. Since $e_2$ is full, they spill back and queue up on arc $e_1$ from time $7$ onward. It follows that
the total travel time from $s$ to $t$ will rise unbounded for later particles.

This example points out that the storage capacity might have a huge influence on the dynamic equilibrium and it shows
that this can even lead to counter-intuitive dynamics, since widening the capacity on $e_2$ leads to a longer travel time
for later particles.

  \section{Spillback Model} \label{sec:flow_dynamics} In this section we introduce the dynamic queuing model with
  spillback, i.e., we specify the properties of the network and the flow dynamics on the arcs. Note that the model is a
  generalization of the model introduced in~\cite{koch2010nash} (see \Cref{prop:generalization}) and the structure of this article follows the lines
  of~\cite{cominetti2011existence,cominetti2015dynamic,koch2010nash}.
  Throughout this paper we consider a directed graph~$G = (V, E)$ with transit times~$\tau_e \geq 0$, outflow
  capacities~$\capo_e > 0$, inflow capacities~$\capi_e > 0$ and storage capacities~$\sigma_e \in (0, \infty]$ on every
  arc~$e \in E$.
  Furthermore, there are two distinguished nodes, a source~$s \in V$ with an inflow rate~$r \geq 0$ and a sink~$t \in
  V$. We assume that every node is reachable from~$s$ and that there is no directed cycle with zero transit time. In
  order to ensure that traversing flow alone can never fill up the storage of an arc $e$ we require that $\sigma_e >
  \capi_e \cdot \tau_e$. We assume that $\sigma_e = \infty$ and $\capi_e > r$ for all $e \in \delta^+(s)$
  and~$\delta^-(s)=\emptyset$ to ensure that spillback never reaches the source, and thus, the network inflow is
  never throttled. This is without loss of generality, because we can ensure these requirements by adding a new source~$s^*$ and a new arc~$e^* = s^*s$ with~$\tau_{e^*} = 0$, $\sigma_{e^*} = \infty$,
  $\capo_{e^*} = r$, and~$\capi_{e^*} = r+1$.
  It is possible to disable the inflow restriction for some arc $e = uv$ by choosing an inflow
  capacity~$\capi_e$ larger than the potential total inflow into~$u$, namely~$\sum_{e \in \delta^-(u)} \capo_{e}$.
  
  \paragraph{Flows over time.} 
  The time-depending flows we consider here are specified by $f = (f^+_e, f^-_e)_{e\in E}$,
    where~$f^+_e, f^-_e\colon [0,\infty) \rightarrow [0, \infty)$ are locally integrable and bounded functions
  for every arc~$e$. The function~$f^+_e$ describes the \emph{inflow rate} and~$f^-_e$ the \emph{outflow rate} of
  arc~$e$ for every given point in time~$\theta \in [0, \infty)$. The cumulative in- and outflow functions are defined
  as follows:
  \[F^+_e(\theta) \coloneqq \int_{0}^{\theta} f_e^+(\xi) \diff \xi \quad \text{ and } \quad
    F^-_e(\theta) \coloneqq \int_{0}^{\theta} f_e^-(\xi) \diff \xi.\]
  Due to technical reasons we define $F_e^+(\theta) = F_e^-(\theta) = 0$ for $\theta < 0$. Note that $F_e^+$ and $F_e^-$ are monotonically increasing and Lipschitz continuous. We say $f = (f^+_e,
  f^-_e)_{e \in E}$ is a \emph{flow over time} if it \emph{conserves flow} at every node~$v \in V \backslash \Set{t}$,
  i.e., if for all $\theta \in [0, \infty)$ the following equation holds
  \[\sum_{e\in \delta^+(v)} f_e^+(\theta) - \sum_{e \in \delta^-(v)} f_e^-(\theta) = \begin{cases}
  0 & \text{ if } v \not= s, \\
  r & \text{ if } v = s.
  \end{cases}\]
  \paragraph{Queues.} If more flow wants to leave~$e$ than possible, a queue builds up, which we imagine as a point queue at the head of
  the arc, as depicted in \Cref{fig:example}. The amount of flow in the queue at time~$\theta$ is given by $z_e(\theta) \coloneqq F_e^+(\theta - \tau_e) -
  F_e^-(\theta)$. Note that flow always leaves the queue as fast as possible, which is indirectly implied by the
  feasibility conditions below.
  
  \paragraph{Full arcs.} The \emph{arc load} is the total amount of flow on an arc $e$ given 
  by~$d_e(\theta) \coloneqq F_e^+(\theta) - F_e^-(\theta)$. It is the sum of the flow traversing the arc and the flow in
  the queue at a point in time~$\theta$. We say the arc is \emph{full} at time~$\theta$ if~$d_e(\theta) = \sigma_e$. For
  technical reasons we also say an arc~$e$ is full if~$d_e(\theta) > \sigma_e$ even though we show in
  \Cref{lem:additional_conditions} that this can never happen for a feasible flow over time. 
  
  \paragraph{Flow bounds.}
  The \emph{inflow bound} $\b_e^+$ is defined
 by 
  \[\b_e^+(\theta) \coloneqq \begin{cases}
  \min\set{f^-_e(\theta), \capi_e} & \text{if $e$ is full at $\theta$},\\
  \capi_e & \text{else,}
  \end{cases}\]
  and the \emph{push rate} $\b_e^-$ of arc~$e$ is given by 
  \[ \b_e^-(\theta) \coloneqq \begin{cases}
    \capo_e & \text{ if } z_e(\theta) > 0, \\
    \min\Set{f_e^+(\theta - \tau_e), \capo_e} & \text{ if } z_e(\theta) \leq 0.
    \end{cases}\]
  The value~$\b_e^-(\theta)$ describes the rate, with which the flow would leave arc~$e$ at time~$\theta$ if it is not
  restricted by any spillback. Obviously, this is an upper bound on the actual outflow rate~$f_e^-(\theta)$, which is
  captured by the fair allocation condition below. Due to spillback it is possible that for some arc $e$ we have $f_e^-(\theta) < \b_e^-(\theta)$. In this case we call $e$
  \emph{throttled} at time~$\theta$.
  
\paragraph{Feasibility.}
  A flow over time $f$ is \emph{feasible} if it satisfies the following four conditions:
  \begin{itemize}
  \item \emph{Inflow condition:} We have~$f^+_e(\theta) \leq \b_e^+(\theta)$ for all~$\theta$ and every arc~$e \in E$.
  \item \emph{Fair allocation condition:} For every node~$v$ at time~$\theta$ there is a~$\fac_v(\theta) \in (0,1]$ such
  that $f_e^-(\theta) = \min\Set{\b_e^-(\theta), \capo_e \cdot
    \fac_v(\theta)}$ for all incoming arcs~$e \in \delta^-(v)$.
  \item \emph{No slack condition:} For every node~$v$ it holds that if there is an incoming arc that is throttled at
  time~$\theta$, then there has to be at least one outgoing arc~$e \in \delta^+(v)$ with~$f^+_e(\theta) =
  \b_e^+(\theta)$. 
  \item \emph{No deadlock condition:} For every point in time~$\theta$ the set of full arcs is cycle free.
  \end{itemize}
Intuitively, the fair allocation condition ensures that the total node inflow is shared among the incoming throttled
arcs proportionally to their outflow capacities (similar to the zipper method in traffic) and the no slack conditions ensures that no arc is
throttled causeless. Since the fair allocation condition would fail in the case of a cycle of full arcs, we exclude this scenario in our model. In a Nash flow over time a deadlock can never occur anyway.

In a feasible flow over time the outflow rates never exceed the outflow capacities, queues never become negative, and the arcs don't get overfull:
\begin{lemma} \label{lem:additional_conditions}
  A feasible flow over time $f$ satisfies the following conditions for all $\theta$ and every arc~$e$:
  \begin{enumerate}
    \item Outflow capacity condition: $f^-_e(\theta) \leq \capo_e$. \label{it:capo_condition}
    \item Non-deficit condition: $z_e(\theta) \geq 0$. \label{it:no_deficit_condition}
    \item Storage condition: $d_e(\theta) \leq \sigma_e$. \label{it:storage_condition}
  \end{enumerate}
\end{lemma}
\begin{proof}\phantom{a}
    \begin{enumerate}
      \item[\ref{it:capo_condition}] This follows immediately from the fair allocation condition.       
      \item[\ref{it:no_deficit_condition}] Assume for contradiction that $z_e(\theta)<0$ at some point. Since $z_e$ is continuous we find an interval $(\theta_0, \theta_1]$ with $z_e(\theta_0) = 0$ and $z_e(\theta) < 0$ for all $\theta \in (\theta_0, \theta_1]$.
      By the fair allocation condition and the definition of the push rate for the case of $z_e(\theta) \leq 0$ it follows that $f_e^-(\theta) \leq
      \b_e^-(\theta) \leq f_e^+(\theta - \tau_e)$ for all $\theta \in [\theta_0, \theta_1]$. This leads to a
      contradiction:
      \[0 > z_e(\theta_1) - z_e(\theta_0) = \int_{\theta_0}^{\theta_1} f_e^+(\xi - \tau_e) - f_e^-(\xi) \diff \xi
      \geq 0.\]
      \item[\ref{it:storage_condition}] Assume for contradiction that $d_e(\theta)>\sigma_e$ at some point. Since $d_e$ is continuous we find an interval $(\theta_0, \theta_1]$ with $d_e(\theta_0) = \sigma_e$ and $d_e(\theta) > \sigma_e$ for all $\theta \in (\theta_0, \theta_1]$. From the inflow condition it follows that
      $f_e^+(\theta)\leq f_e^-(\theta)$ for all $\theta \in [\theta_0, \theta_1]$. Again this leads to a contradiction, since we have
      \[0 < d_e(\theta_1) - d_e(\theta_0) = \int_{\theta_0}^{\theta_1} f_e^+(\xi) - f_e^-(\xi) \diff \xi 
      \leq 0.\]      
    \end{enumerate}
    \end{proof}

\begin{lemma}\label{lem:arc_saturation}
  If $e$ is full at time $\theta$, we have $z_e(\theta) > 0$.
\end{lemma}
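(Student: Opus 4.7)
The plan is to compare the arc load $d_e(\theta)$ and the queue amount $z_e(\theta)$ directly from their definitions and exploit the standing assumption $\sigma_e > \capi_e \cdot \tau_e$ from \Cref{sec:flow_dynamics}.

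First I would observe that
\[
 d_e(\theta) - z_e(\theta) = \bigl(F_e^+(\theta) - F_e^-(\theta)\bigr) - \bigl(F_e^+(\theta - \tau_e) - F_e^-(\theta)\bigr) = \int_{\theta - \tau_e}^{\theta} f_e^+(\xi)\,\diff \xi,
\]
so $d_e(\theta) - z_e(\theta)$ measures exactly the flow that has entered $e$ within the last $\tau_e$ time units and is therefore still traversing (i.e.\ not yet in the point queue).

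Next I would bound this integral using the inflow condition. By the definition of the inflow bound $\b_e^+$, we have $\b_e^+(\xi) \leq \capi_e$ for every $\xi$, and feasibility gives $f_e^+(\xi) \leq \b_e^+(\xi)$. Hence
\[
 d_e(\theta) - z_e(\theta) \leq \capi_e \cdot \tau_e.
\]

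Finally I would combine this with the fullness hypothesis $d_e(\theta) \geq \sigma_e$ and the standing assumption $\sigma_e > \capi_e \cdot \tau_e$ to conclude
\[
 z_e(\theta) \geq \sigma_e - \capi_e \cdot \tau_e > 0.
\]
There is no real obstacle here: the argument is entirely mechanical once one notices that $d_e - z_e$ equals the integral of $f_e^+$ over $[\theta - \tau_e, \theta]$, and the crucial modelling assumption $\sigma_e > \capi_e \tau_e$ (made precisely so that traversing flow alone cannot fill the arc) does the rest.
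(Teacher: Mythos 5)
Your proof is correct and is essentially the paper's own argument: both bound the traversing flow $F_e^+(\theta)-F_e^+(\theta-\tau_e)$ by $\capi_e\cdot\tau_e$ via the inflow condition and then invoke fullness together with the standing assumption $\sigma_e > \capi_e\cdot\tau_e$. The only cosmetic difference is that you phrase it as the identity $d_e(\theta)-z_e(\theta)=\int_{\theta-\tau_e}^{\theta} f_e^+(\xi)\,\diff\xi$, while the paper writes the same estimate directly on the cumulative flows.
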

\begin{proof}
Using the inflow condition $f_e^+(\theta) \leq \capi_e$ and the requirement that $\sigma_e > \capi_e \cdot \tau_e$ we obtain
\[z_e(\theta) = F_e^+(\theta - \tau_e) - F_e^-(\theta) \geq F_e^+(\theta) - \capi_e \cdot \tau_e - F_e^-(\theta) = \sigma_e - \capi_e \cdot \tau_e >0.\]
\end{proof}

\paragraph{Spillback factor.}
For every node $v$ we call the \emph{maximal} value $\fac_v(\theta) \in (0,1]$ that fulfills the fair allocation condition
the \emph{spillback factor} for node $v$ at time~$\theta$.

  If more flow wants to enter a node than the outgoing arcs can handle, the spillback factor will be strictly less than $1$.
  In this case the fair allocation condition ensures that flow conservation holds by temporarily reducing the outflow capacity of incoming arcs to $c_v(\theta) \cdot \capo_e$.
  \paragraph{Travel and arrival times.} Given a network and a feasible flow over time, an important question is at which
  time a sample flow particle starting at~$s$ at time~$\theta$ can reach a node~$v \in V$. First, we consider the \emph{waiting time} in the queue for a particle entering an arc~$e$ at time~$\theta$, which is given by 
  \[q_e(\theta) \coloneqq \min \Set{ q  \geq 0| \int_{\theta + \tau_e}^{\theta  + \tau_e + q} f_e^-(\xi) \; \diff \xi 
  = z_e(\theta  + \tau_e)}.\] 
  To show that the set on the right hand side is never empty, and thus $q_e(\theta)$ is well-defined, we prove that
  there is a network-wide lower bound on the outflow rate of arcs with positive queues.  

  \begin{restatable}{lemma}{qwelldefined}
  \label{lem:q_well-defined}
    For a given network there is an $\epsilon>0$ such that for every arc~$e$ with~$z_e(\theta)>0$ we have
    $f_e^-(\theta)\geq\epsilon$, and therefore, the waiting time function $q_e$ is well-defined.
  \end{restatable}
  
    \begin{proof}
    There can be a long chain of full arcs behind $e$ reducing the outflow rate significantly. But due to the no
    deadlock and the no slack condition there has to be an arc where the inflow capacity is
    exhausted. Using the fair allocation condition it is possible to choose $\epsilon$ only depending on the smallest
    capacity and the total number of arcs. 
    
   We set $\epsilon \coloneqq \left(\frac{\capmin}{\Sigma}\right)^{\abs{E}} \cdot \capmin$, where $\capmin\coloneqq \min \left(\set{ \capi_e, \capo_e | e \in E } \cup \set{1}\right)$ and $\Sigma \coloneqq \max\Set{\sum_{e \in E} \capo_e, 1}$. 
  If $e$ is not throttled, we have $f_e^-(\theta)=\b_e^-(\theta)=\capo_e \geq \capmin \geq \epsilon$. 
  So suppose $e$ is throttled. By the no slack condition there has to be a consecutive arc $e_1$ with $f_{e_1}^+(\theta) = b_{e_1}^+(\theta)$.
  If $e_1$ is full and throttled, we consider the next arc $e_2$, where the inflow bound is exhausted. We continue until we find an arc $e_k$ that is not full or not throttled. Since the set of full arcs is cycle free by the no deadlock condition this sequence $(e_1 , \ldots , e_k)$ is finite with $k\leq m$. By \Cref{lem:arc_saturation} we have that $f_{e_k}^+(\theta) = \b_e^+(\theta) \geq \min\set{\capi_{e_k}, \capo_{e_k}}$.
  Furthermore, for two consecutive arcs $e_{i-1}=uv$ and $e_i=vw$ we have
  \begin{equation} \label{eq:lowerbound_on_outflow} f_{e_{i-1}}^-(\theta) = \fac_v(\theta) \cdot \capo_{e_{i-1}} \geq \frac{\sum_{e' \in \delta^+(v)}f_{e'}^+(\theta)}{\sum_{e' \in \delta^-(v)}\capo_{e'}} \cdot \capmin \geq \frac{f_{e_i}^+(\theta)}{\Sigma}\cdot \capmin. \end{equation}
  Since the arcs $e_1, \dots, e_{k-1}$ are full with exhausted inflow capacity it holds that $f_{e_i}^+(\theta)=\b^+_{e_i}(\theta) = \min\set{f_{e_i}^-(\theta), \capo_{e_i}}$. Recursive application of
   \eqref{eq:lowerbound_on_outflow} along the sequence gives $f_e^-(\theta)\geq \left(\frac{\capmin}{\Sigma}\right)^k \cdot \capmin \geq \epsilon$.
  
  Next, we show that the set in the definition of $q_e(\theta)$ is not empty. In the case that there exists a $\theta'  \geq \theta$ such that $z_e(\theta ' + \tau_e)=0$ we have
  \[0=z_e(\theta ' + \tau_e) \geq F_e^+(\theta) - F_e^-(\theta ' + \tau_e) 
  = z_e(\theta+ \tau_e) - \int_{\theta + \tau_e}^{\theta ' + \tau_e} f_e^-(\xi) \; \diff \xi.\]
  Thus, there exists a $q \in [0, \theta ' - \theta]$ that is in the set.
  In the case that $z_e(\theta' + \tau_e)>0$, and thus $f_e^-(\theta' + \tau_e)\geq \epsilon$, for all $\theta' \geq
  \theta$ we have that $\int_{\theta + \tau_e}^{\theta + \tau_e + q} f_e^-(\xi) \; \diff \xi \to \infty$ for $q \rightarrow \infty$. Since $z_e(\theta + \tau_e)$ is a fixed value there has to be a $q$ that satisfies the condition of the set.
  Hence, the set is non-empty and due to the continuity in $q$ it is closed, which shows that the minimum exists.
\end{proof}

  
  A particle entering an arc~$e$ at time~$\theta$ first traverses the arc in $\tau_e$ time, then waits in the queue for
  $q_e(\theta)$ time units before it leaves the arc at the \emph{exit time} $T_e(\theta)\coloneqq \theta + \tau_e + q_e(\theta)$.
  We denote the time a particle starting at time~$\theta$ needs to traverse a path~$P=(e_1, \dots, e_k)$ by $T_P(\theta)\coloneqq T_{e_k} \circ \ldots \circ T_{e_1}(\theta)$.
  The \emph{earliest arrival time function} $\l_v\colon [0, \infty) \rightarrow [0, \infty)$ maps a time $\theta$ to the
  earliest time a sample particle, starting at $\theta$ at $s$, can reach $v$, i.e.,
  $\l_v(\theta) \coloneqq \min_{P \in \paths_v} T_P(\theta)$,
  where $\paths_v$ denotes the set of all $s$-$v$-paths. 
  They are also characterized by the
  following dynamic Bellman's equations:
 \begin{equation}\label{eqn:bellman}\l_v(\theta) = \begin{cases}
 \qquad\theta & \text{ if } v =s, \\
 \min\limits_{e=uv \in \delta^-(v)} T_e(\l_u(\theta))& \text{ if } v \neq s,
 \end{cases} \quad \text{ for all } v \in V.\end{equation}
Since we require the transit times of all directed cycles to be positive this is well defined.
         
The following lemma is a collection of technical properties, which will be useful later on.
  \begin{restatable}{lemma}{technicalproperties} \label{lem:technical_properties}
    For a feasible flow over time $f$ it holds for all $e \in E$, $v \in V$ and~$\theta \geq 0$ that:
    \begin{enumerate}
      \item $F^+_e(\theta)=F^-_e(T_e(\theta))$. \label{it:fifo}
      \item $q_e(\theta) > 0 \;\; \Leftrightarrow \;\; z_e(\theta+\tau_e) > 0$. \label{it:q_equiv_z}
      \item For $\theta_1 < \theta_2$ with $F_e^+(\theta_2) - F_e^+(\theta_1) = 0$, and $z_e(\theta_2 + \tau_e)>0$  
      we have $T_e(\theta_1)=T_e(\theta_2)$. \label{it:equal_exit_times}
      \item If $f_e^-(T_e(\theta)) = 0$ then $F_e^+(\theta + q_e(\theta)) - F_e^+(\theta) = 0$. \label{it:no_inflow}
      \item For the push rate function it holds that
      \[\b_e^-(T_e(\theta)) = \begin{cases}
      \capo_e &\text{ if }F_e^+(\theta + q_e(\theta)) - F_e^+(\theta) > 0,\\
      \min\set{f_e^+(T_e(\theta) - \tau_e), \capo_e} &\text{ else.}
      \end{cases}\!\] \label{it:maxoutflow_depending_on_inflow}
      \item We have $z_e(\theta +\tau_e + \xi) > 0$ for all $\xi \in [0, q_e(\theta))$.
      \label{it:positive_queue_while_emptying}
      \item The function $T_e$  and $\l_v$ are monotonically increasing. \label{it:T_monoton}
      \item The functions $q_e$, $\ell_v$ and $T_e$ are Lipschitz continuous. \label{it:q_is_lipschitz}
    \end{enumerate}
  \end{restatable}
The technical proof can be found in the appendix.
 Note that the Lipschitz continuous functions $q$, $T$, and $\l$ are almost everywhere
  differentiable due to Rademacher's theorem; see~\cite{rademacher1919partielle}.   
  The derivatives of the waiting times are described in the following lemma.
\begin{lemma} \label{lem:q'}
For almost all $\theta$ the following is true:
  \[q'_e(\theta) = \begin{cases}
  \frac{f^+_e(\theta)}{f^-_e(T_e(\theta))} - 1 &\text{if } f^-_e(T_e(\theta)) > 0, \\
  -1 &\text{else if } z_e(\theta + \tau_e) > 0,\\
  0 & \text{else.}
  \end{cases}\]
\end{lemma}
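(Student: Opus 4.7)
The plan is to differentiate the identity $F_e^+(\theta) = F_e^-(T_e(\theta))$ furnished by \Cref{lem:technical_properties}~\ref{it:fifo}. Both sides are Lipschitz continuous: the right-hand side because $F_e^-$ is Lipschitz and $T_e(\theta) = \theta + \tau_e + q_e(\theta)$ is Lipschitz by \Cref{lem:technical_properties}~\ref{it:q_is_lipschitz}. By Rademacher's theorem both sides are a.e.\ differentiable, and the Lebesgue differentiation theorem gives $(F_e^+)'(\theta) = f_e^+(\theta)$ and $(F_e^-)' = f_e^-$ almost everywhere. Applying the Lipschitz chain rule on the right-hand side then yields
\[ f_e^+(\theta) \;=\; f_e^-(T_e(\theta))\cdot T_e'(\theta) \;=\; f_e^-(T_e(\theta))\cdot \bigl(1 + q_e'(\theta)\bigr) \]
for almost every $\theta$.

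From this identity I split into the three branches. If $f_e^-(T_e(\theta)) > 0$, I simply solve for $q_e'(\theta)$ and obtain the first branch. If $z_e(\theta+\tau_e) = 0$, then \Cref{lem:technical_properties}~\ref{it:q_equiv_z} gives $q_e(\theta) = 0$, and since $q_e \geq 0$ is continuous, $\theta$ is a global minimum of $q_e$, so $q_e'(\theta) = 0$ at every point of differentiability, handling the third branch. For the middle branch ($f_e^-(T_e(\theta)) = 0$ and $z_e(\theta+\tau_e) > 0$) the chain-rule identity above degenerates to $0 = 0$, so I need a separate argument. Here \Cref{lem:technical_properties}~\ref{it:no_inflow} gives $F_e^+(\theta+q_e(\theta)) = F_e^+(\theta)$, i.e. $F_e^+$ is constant on $I \coloneqq [\theta,\, \theta + q_e(\theta)]$. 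Using the defining minimum of $T_e(\theta')$ for $\theta' \in I$, the constancy of $F_e^+$ on $I$ together with monotonicity of $T_e$ forces $T_e \equiv T_e(\theta)$ on the whole interval $I$. Consequently $q_e(\theta') = T_e(\theta) - \theta' - \tau_e$ is affine with slope $-1$ on $I$, giving $q_e'(\theta) = -1$ at every interior point.

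The most delicate step is the chain-rule application: at points where $T_e'(\theta) > 0$ one must exclude the possibility that $T_e(\theta)$ lies in the null set where $F_e^-$ fails to be classically differentiable; this is controlled by the monotonicity and Lipschitz continuity of $T_e$ via a standard area-formula estimate, and thus fails only on a null set of $\theta$. A minor housekeeping point in the middle case is that the conclusion $q_e'(\theta) = -1$ is established at interior points of the maximal intervals of constancy of $T_e$; these intervals are pairwise disjoint, hence countable, and the countable collection of left endpoints is negligible for the a.e.\ statement.
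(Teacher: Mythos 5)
Your proof is correct and follows essentially the same route as the paper: differentiating the cumulative identity $F_e^+(\theta)=F_e^-(T_e(\theta))$ (the paper differentiates the equivalent queue-integral identity) and then treating the three cases with the same ingredients, namely no-inflow and the constancy of $T_e$ for the middle branch and nonnegativity of $q_e$ with $q_e(\theta)=0$ for the last branch. Your explicit handling of the Lipschitz chain rule (excluding the null set where $(F_e^-)'\neq f_e^-$ via a Serrin--Varberg-type argument) and the countable-left-endpoints bookkeeping in the middle case are just more careful versions of steps the paper performs implicitly, e.g.\ via the one-sided derivative argument.
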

\begin{proof}
  By definition of $q_e(\theta)$ we have
  $F_e^-(T_e(\theta)) - F_e^-(\theta+\tau_e) 
  = z_e(\theta + \tau_e)$.
  Since the functions $F_e^-$, $z_e$, and $q_e$ are almost everywhere differentiable we can
  take the derivative on both sides to obtain
  \[f_e^-(T_e(\theta)) \cdot (1 + q'_e(\theta)) - f_e^-(\theta+\tau_e) = z'_e(\theta + \tau_e).\]
  Since $z_e'(\theta + \tau_e) =
  f_e^+(\theta) - f_e^-(\theta + \tau_e)$ we get that
  $q'_e(\theta) = \frac{f^+_e(\theta)}{f^-_e(T_e(\theta))} - 1$ if $f_e^-(T_e(\theta)) > 0$.
  
  In the case of $f_e^-(T_e(\theta))=0$ and $z_e(\theta + \tau_e) > 0$ \Cref{lem:technical_properties}\ref{it:no_inflow} yields $F_e^+(\theta + \xi) - F_e^+(\theta) = 0$ for all $\xi \in [0, q_e(\theta)) \not=
  \emptyset$, and therefore, $T_e(\theta) = T_e(\theta + \xi)$ by \Cref{lem:technical_properties}~\ref{it:q_equiv_z} and~\ref{it:equal_exit_times}.
  It follows that  
 \[q_e(\theta + \xi) = T_e(\theta + \xi) - \theta - \xi - \tau_e
  = T_e(\theta) - \theta - \tau_e - \xi = 
  q_e(\theta) - \xi.\]
  Hence, the right derivative of $q_e$ at~$\theta$ equals $-1$. Hence, either $q$ is not differentiable at~$\theta$ or $q'_e(\theta) = -1$.
  
  Finally, in the case of $f_e^-(T_e(\theta))=0$ and $z_e(\theta + \tau_e) = 0$ we have
  $q_e(\theta) = 0$ by \Cref{lem:technical_properties}~\ref{it:q_equiv_z}, and thus, $\theta$ is a local minimum of $q_e$. Hence, $q_e$ is either not differentiable at $\theta$ or $q_e'(\theta )=0$.
\end{proof}

  \paragraph{Active, resetting and spillback arcs.} For every point in time~$\theta$ we define the following classes of
  arcs. We say an arc is \emph{active} for
  $\theta$ if it attains the minimum in \eqref{eqn:bellman}, i.e., the set of active arcs is \[E_{\theta}'=\Set{e=uv \in E  | \l_v(\theta)=T_e(\l_u(\theta))}.\]
  The subgraph $G_\theta' \coloneqq (V, E'_\theta)$, is called \emph{current shortest paths network}. Note that
  this graph is acyclic since all directed cycles have a positive transit time.
  Furthermore, every node is reachable from $s$ in $G_\theta'$ since the indegree of all those nodes is positive.
      
  We call the set of arcs on which the particle entering at time~$\theta$ would experience a queue \emph{resetting arcs}
  and arcs that are full when the particle would arrive there are called \emph{spillback arcs}. We denote them by
  \[E_\theta^* \coloneqq \Set{\!e \!=\! uv \in E | q_e(\l_u(\theta)) > 0\!} \text{ and } \bar E_\theta \coloneqq \Set{\!e \!=\! uv \in E | d_e(\l_u(\theta)) = \sigma_e\!}.\]
  
  \section{Nash Flows Over Time and Spillback Thin Flows} \label{sec:Nash_flows}
  In this section we define a dynamic
  equilibrium, called \emph{Nash flow over time}, for the spillback model and we show, as a central structural result,
  that the strategy of every particle can be described by a particular static flow, which we call \emph{spillback thin
  flow}.

  \paragraph{Nash flows over time.} A feasible flow over time $f = (f^+_e, f^-_e)_{e\in E}$ is a \emph{Nash flow over
  time} if it satisfies the \emph{Nash flow condition}, i.e., for almost all $\theta \in [0, \infty)$ and all arcs $e =
  uv$ we have
    \begin{equation*}
    f_e^+(\theta) > 0 \quad \Rightarrow \quad \theta \in \Set{\l_u(\vartheta) \in [0, \infty) | e \in E'_\vartheta}.
    \end{equation*} 
  
  \begin{remark}
  A game theoretical Nash equilibrium is a state such that no player can improve by choosing an alternative strategy.
  Since for every particle starting in~$s$ at time~$\theta$ the earliest possible arrival time~$\l_t(\theta)$ is
  realized, there is no improving move from the perspective of a single particle.
   \end{remark}

The following lemma give some chracterizations of a Nash flow over time.
\begin{restatable}{lemma}{xwelldefined}\label{lem:x_well-defined}
Let $f$ be a feasible flow over time. The following statements are equivalent.
\begin{enumerate}
  \item $f$ is a Nash flow over time. \label{it:nash_flow}
  \item $F^+_e(\l_u(\theta)) = F_e^-(\l_v(\theta))$ for all arcs $e = uv$ and all times~$\theta$. \label{it:in_equals_out_at_l}
  \item For all $\theta \geq 0$ and all $e = uv \in E$: 
  If $F^+_e(\l_u(\theta) - \epsilon) < F^+_e(\l_u(\theta))$
   for all $\epsilon > 0$ then $e \in E'_\theta$. \label{it:F_increasing_means_active}
\end{enumerate}
\end{restatable}
The proof of $\ref{it:nash_flow}\Leftrightarrow \ref{it:in_equals_out_at_l}$ can be found in \cite[Theorem 1]{cominetti2015dynamic}. The equivalence to \ref{it:F_increasing_means_active} is shown in the appendix.

The active, resetting, and spillback arcs in a Nash flow over time have the following properties:  
  \begin{lemma} \label{lem:resetting_implies_active}
  Given a Nash flow over time the following holds for all times $\theta$:
  \begin{enumerate}
    \item $E^*_\theta \subseteq E'_\theta$. \label{it:resetting_subset_active}
    \item $E'_\theta = \set{e = uv | \l_v(\theta) \geq \l_u(\theta) + \tau_e}$. \label{it:characterization_of_E'}
    \item $E^*_\theta = \set{e = uv | \l_v(\theta) > \l_u(\theta) + \tau_e}$. \label{it:characterization_of_E^*}
    \item $\bar E_\theta \subseteq E'_\theta$. \label{it:full_subset_active}
    \item $\l_u(\theta) < \l_v(\theta)$ for all $e = uv \in \bar E_\theta$. \label{it:full_arc_need_time} 
  \end{enumerate}  
  \end{lemma}
\begin{proof}
  For the proof of \ref{it:resetting_subset_active} to \ref{it:characterization_of_E^*} see \cite[Proposition 2]{cominetti2015dynamic}.  
  \begin{enumerate}
 \item[\ref{it:full_subset_active}] Since $e$ is full at time $\l_u(\theta)$ we have by \Cref{lem:arc_saturation} that
 $z_e(\l_u(\theta)) >0$.
 Therefore, by continuity of $z_e$ and \Cref{lem:q_well-defined} we have that $f_e^-(\xi) > 0$ for all $\xi \in
 [\l_u(\theta) - \delta, \l_u(\theta)]$ for a small $\delta > 0$. It follows that for all $\epsilon > 0$ we have
 $F_e^-(\l_u(\theta)) - F_e^-(\l_u(\theta) - \epsilon) > 0$.
 This together with the storage condition in \Cref{lem:additional_conditions}~\ref{it:storage_condition} yields
 \begin{align*}
 F_e^+(\l_u(\theta)) - &F_e^+(\l_u(\theta) - \epsilon)\\
 &= d_e(\l_u(\theta)) + F_e^-(\l_u(\theta)) - d_e(\l_u(\theta)-\epsilon) - F_e^-(\l_u(\theta)-\epsilon) \\
 &> \sigma_e - d_e(\l_u(\theta)-\epsilon) \geq 0.
 \end{align*} 
 Hence, \Cref{lem:x_well-defined}~\ref{it:F_increasing_means_active} implies $e \in E'_{\theta}$.
  \item[\ref{it:full_arc_need_time}] 
  
  Due to \ref{it:full_subset_active}, $e$ is active, i.e., $ \l_u(\theta) + \tau_e + q_e(\l_u(\theta)) = \l_v (\theta)$. Thus, $\tau_e > 0$ is clear. If $\tau_e = 0$
we get by \Cref{lem:arc_saturation} that $0 < z_e(\l_u(\theta)) = z_e(\l_u(\theta)+\tau_e)$. Hence, \Cref{lem:technical_properties}~\ref{it:q_equiv_z} implies
  $q_e(\l_u(\theta)) > 0$.   
%
 \end{enumerate}
\end{proof}

It is worth noting that in general we do not have $\bar E_\theta \subseteq E^*_\theta$.
An arc which is full at time $\l_u(\theta)$ always has a positive queue at this point in time. Though it is possible that the queue depletes until time $\l_u(\theta)+\tau_e$, in which case $e \not\in E^*_\theta$.
  
  \paragraph{Underlying static flows.}
  \Cref{lem:x_well-defined}~\ref{it:in_equals_out_at_l} motivates to define the \emph{underlying static flows} for all $\theta$:
  \[x_e(\theta) \coloneqq F^+_e(\l_u(\theta)) = F_e^-(\l_v(\theta)).\]
 It is easy to verify that for a fixed time~$\theta$ this is indeed a static $s$-$t$-flow of flow value~$r \cdot \theta$ and
 that $x_e$ as a function is monotonically increasing and Lipschitz continuous. 
 Applying Rademacher's theorem to $x_e$
 and $\l_v$ we obtain derivatives~$x'_e(\theta)$ and $\l'_v(\theta)$ almost everywhere.
It is possible to reconstruct the Nash flow over time by these derivative functions, since 
\[x'_e(\theta) = f_e^+(\l_u(\theta)) \cdot \l'_u(\theta) = f_e^-(\l_v(\theta)) \cdot \l'_v(\theta).\]
Furthermore, $x'(\theta)$ forms a
static $s$-$t$-flow of value~$r$ and can be seen as the strategy of the flow entering the network at time~$\theta$. In
other words, these derivative functions characterize a Nash flow over time and it turns out that they have a very particular
structure, which we call \emph{spillback thin flows}. This is a generalization of \emph{thin flows with resetting} introduced by~\cite{koch2010nash}, as explained in \Cref{prop:generalization} below.
  
  \paragraph{Spillback thin flows.}
  Consider an acyclic directed graph~$G' = (V, E')$ with a source~$s$ and a sink~$t$ where all nodes are
  reachable from~$s$. Every arc~$e$ is equipped with an outflow capacity~$\capo_e>0$ and an inflow bound~$\b_e^+>0$.
  Additionally, we are given a subset of arcs~$E^*\subseteq E'$. 
  A static $s$-$t$-flow $x'$ of value $r$ (which does not need to obey the capacities) together with two node
  labelings $\l'_v \geq 0$ and $\fac_v \in (0,1]$ is a \emph{spillback thin flow} with resetting on $E^*$ if it
  fulfills the following equations:
  \begin{alignat}{2}
  \l'_s &= \frac{1}{\fac_s}, \label{eqn:l'_s}\tag{TF1}\\  
  \l'_v &= \min_{e = uv \in E'} \rho_e(\l'_u, x'_e,\fac_v) \quad 
   \text{ for } v \in V\backslash\Set{s},\label{eqn:l'_v_min}\tag{TF2}\\
  \l'_v &= \rho_e(\l'_u, x'_e, \fac_v)
   \text{ for } e=uv \in E' \text{ with } x'_e > 0, \label{eqn:l'_v_tight}\tag{TF3}\\
  \l'_v &\geq \max_{e = vw \in E'} \frac{x'_e}{\b_e^+} \;\;
   \text{ for } v \in V, \label{eqn:l'_v_min_blocked}\tag{TF4}\\
  \l'_v &= \max_{e = vw \in E'} \frac{x'_e}{\b_e^+} \;\;
   \text{ for } v \in V \text{ with } \fac_v < 1, \label{eqn:l'_v_min_blocked_equal}\tag{TF5}
  \end{alignat}
  where \[\rho_e(\l'_u, x'_e, \fac_v) \coloneqq \begin{cases}
  \frac{x'_e}{\fac_v \cdot \capo_e}& \text{if } e = uv \in E^*,\\
  \max\Set{\l'_u, \frac{x'_e}{\fac_v \cdot \capo_e}}\!& \text{if } e = uv \in E'\backslash E^*.
  \end{cases}\]
  The next theorem describes the relation between spillback thin flows and Nash flows over time.
  \begin{theorem}
  \label{thm:nash_flow_derivatives_are_thin_flow}
    For almost all~$\theta \in [0, \infty)$ the derivatives $x'_e(\theta)$ and $\l'_v(\theta)$ of a Nash flow over time together with the spillback factors
    $\fac_v(\l_v(\theta))$ form  a spillback thin flow on the current shortest
    paths network $G'_\theta = (V, E'_\theta)$ with resetting on the arcs with queue $E_\theta^*$ and inflow bounds
    $\b_e^+(\l_u(\theta))$.
  \end{theorem} 
  \begin{proof}
  We fix a point in time $\theta$ such that for all $e = uv \in E$ the derivatives of $x_e$, $\l_v$, and $T_e \circ \l_u$ exist and $x_e'(\theta) = f_e^-(\l_v(\theta)) \cdot \l'_v(\theta) = f_e^+(\l_u(\theta)) \cdot \l'_u(\theta)$. Note that almost all points in time satisfy these conditions.
  For short, let $\l_v' \coloneqq \l_v'(\theta)$, $x'_e \coloneqq x'_e(\theta)$, $\fac_v \coloneqq \fac_v(\l_v(\theta))$, $b_e^+
  \coloneqq b_e^+(\l_u(\theta))$, $E' \coloneqq E'_\theta$, and $E^* \coloneqq E^*_\theta$.
%

 \paragraph{(\ref{eqn:l'_s})} We have $\l_s(\theta) = \theta$ yielding $\l'_s=1$. 
 By assumption $\delta^-(s)=\emptyset$, as well as, $\sigma_e = \infty$ and $\capi_e > r$ for $e \in \delta^+(s)$. 
 Hence, the no slack condition implies $\fac_s = 1$.

  \paragraph{(\ref{eqn:l'_v_min})}  
  
    By differentiating $\l_v(\theta) = \min_{e = uv \in E} T_e(\l_u(\theta))$,
  we obtain that
  \[\l'_v = \min_{e = uv \in E'} T'_e(\l_u(\theta)) \cdot \l'_u.\]
  Note that $E'$ is exactly the set of arcs with $\l_v(\theta) = T_e(\l_u(\theta))$, and therefore, exactly these need to be considered for the derivative.
  In the following we analyze the derivative of $T_e(\theta) = \theta + \tau_e + q_e(\theta)$ at the point $\l_u(\theta)$
  for active arcs $e = uv \in E'$. \Cref{lem:q'} yields  
  \[T'_e(\l_u(\theta)) = \begin{cases}\frac{f_e^+(\l_u(\theta))}{f_e^-(\l_v(\theta))} &\text{ if } f_e^-(\l_v(\theta)) > 0, \\
  0 &\text{ else if } z_e(\l_u(\theta) + \tau_e) > 0,\\
  1 & \text{ else.}
  \end{cases}\]
  First, we consider the case $f_e^-(\l_v(\theta)) = 0$, which implies $x'_e = 0$, and hence,
  \begin{align*} T'_e(\l_u(\theta)) \cdot \l'_u = \left\{\begin{array}{ll}  
  0 &\quad\text{if } q_e(\l_u(\theta)) > 0,\\
   \l'_u &\quad\text{else},
  \end{array}\right\} 
  = \rho_e(\l'_u, x'_e, \fac_v).
  \end{align*}
  Next, we consider the case $f_e^-(\l_v(\theta)) > 0$ and $x'_e = 0$. If $e \not \in E^*$, we  have
  $f_e^+(\l_u(\theta)) = f_e^+(\l_v(\theta) - \tau_e) \geq \b_e^-(\l_v(\theta)) \geq f_e^-(\l_v(\theta)) > 0$, which
  implies $\l'_u = \frac{x_e'}{f_e^+(\l_u(\theta))} = 0$. In both cases, whether $e \in E^*$ or not, we have
  $T'_e(\l_u(\theta)) \cdot \l'_u = \frac{x'_e}{f_e^-(\l_v(\theta))} = 0 = \rho_e(\l'_u, x'_e, \fac_v)$.
  
  Finally, we consider $f_e^-(\l_v(\theta)) > 0$ and $x'_e > 0$. This implies that $x_e(\theta) = F_e^+(\l_u(\theta))$ is strictly increasing in $[\l_u(\theta), \l_u(\theta) + \epsilon]$, and therefore,
  $F_e^+(\l_u(\theta) + q_e(\l_u(\theta))) - F_e^+(\l_u(\theta)) > 0$ if and only if $q_e(\l_u(\theta)) > 0$.
  We obtain together with \Cref{lem:technical_properties}~\ref{it:maxoutflow_depending_on_inflow} that
  \[\b_e^-(\l_v(\theta)) = \begin{cases}
  \capo_e & \text{ if } e \in E^*,\\
  \min\set{f_e^+(\l_u(\theta)), \capo_e} & \text{ if }e \in E' \backslash E^*.
  \end{cases}\]
  Hence,
  \begin{equation} \label{eqn:T'l'_equals_rho}
  \begin{aligned}
  T'_e(\l_u(\theta)) \cdot \l'_u &= \frac{x'_e}{f_e^-(\l_v(\theta))}
  = \frac{x'_e}{\min\set{\fac_v \cdot \capo_e, \b_e^-(\l_v(\theta))}}\\
  &= \begin{cases}  \frac{x'_e}{\fac_v \cdot \capo_e} & \text{ if } e \in E^*\\
  \max\Set{\frac{x'_e}{f_e^+(\l_u(\theta))}, \frac{x'_e}{\fac_v \cdot \capo_e}} & \text{ if } e \in E'\backslash E^*
  \end{cases}\\
  &= \rho_e(\l'_u, x'_e, \fac_v).
  \end{aligned}
  \end{equation}
In summary, we have
\[\l'_v = \min_{e = uv \in E'} T'_e(\l_u(\theta)) \cdot \l'_u = \min_{e = uv \in E'} \rho_e(\l'_u, x'_e, \fac_v).\]
  
  \paragraph{(\ref{eqn:l'_v_tight})} Suppose $x'_e = f_e^-(\l_v(\theta)) \cdot \l'_v  > 0$. With
  \eqref{eqn:T'l'_equals_rho} we get $\l'_v = \frac{x'_e}{f_e^-(\l_v(\theta))} = \rho_e(\l'_u, x'_e, \fac_v)$.

  \paragraph{(\ref{eqn:l'_v_min_blocked})}
  By the inflow condition we have for all arcs $e=vw$ that $x'_e = f_e^+(\l_v(\theta)) \cdot \l'_v \leq \b_e^+ \cdot \l'_v$.
  
  \paragraph{(\ref{eqn:l'_v_min_blocked_equal})} Suppose we have $\fac_v < 1$. The maximality of $\fac_v$ implies that there has to be at least one incoming throttled arc and by the no slack condition there has to be an outgoing arc $e = vw$ with $f^+_e(\l_v(\theta))=\b_e^+$. Hence,
  $x'_e = f_e^+(\l_v(\theta)) \cdot \l'_v = \b_e^+ \cdot \l'_v$. Together with (\ref{eqn:l'_v_min_blocked}) we obtain (\ref{eqn:l'_v_min_blocked_equal}).
 \end{proof}

  \section{Computation of Nash Flows Over Time with Spillback} \label{sec:constructing_nash_flows} In this section we
  show how to construct a Nash flow over time with spillback for a given network using spillback thin flows. The key
  idea is to start with the empty flow over time and to extend it step by step. For this we first show that for all
  acyclic networks $G'= (V, E')$ with arbitrary capacities, outflow bounds, and resetting arcs $E^*$ there always exists
  a spillback thin flow.
  \begin{theorem}
  \label{thm:existence_of_thin_flow}
    Given an acyclic network~$G'= (V, E')$ with source~$s$ and sink~$t$, such that each node is reachable from~$s$, let $(\capo_e)_{e\in E'}$ be outflow capacities, $(\b_e^+)_{e\in E'}$ be inflow bounds, and
    $E^*\subseteq E'$ be a set of arcs. Then there exists a spillback thin flow~$(x', \l', \fac)$ with resetting on~$E^*$.
  \end{theorem}
  The proof uses an existence result for variational inequalities.
Let $I$ be a finite index set, $K \subseteq \R^I$ and $\Gamma\colon K \rightarrow \R^I$. The \emph{variational inequality problem} $\VI(K, \Gamma)$ is to find a vector $x \in K$ such that

\begin{equation}(y - x)^t\Gamma(x) \geq 0, \quad \forall y \in K. \tag{VI}\label{eqn:VI}
\end{equation}
    
   The set of solution set $\SOL(K, \Gamma)$ is non-empty, which can be seen by Brouwer's fixed point theorem; see \cite{harker1990viexistence} for details.
   
   \begin{theorem}[{\cite[Theorem 3.1]{harker1990viexistence}}] \label{thm:solution_of_VI}
   Let $K \subseteq \R^I$ be non-empty, compact and convex and let $\Gamma\colon K \rightarrow \R^I$ be a continuous mapping. Then $\SOL(K,\Gamma)$ is non-empty.
   \end{theorem}
     
If $K$ is a box, i.e., $K = \bigtimes_{i \in I}[0, M_i]$ for some $M_i > 0$, it is easy to see that for a given solution $x^* \in \SOL(K, \Gamma)$ the \emph{nonlinear complematary problem} holds for every $i \in I$ with $x_i^* < M_i$: 
     \begin{equation}\Gamma_i(x^*) \geq 0 \qquad \text{ and } \qquad x^*_i \cdot \Gamma_i(x^*) = 0.\tag{NCP} \label{eqn:NCP}
     \end{equation}
   In order to define $K$ and $\Gamma$ for our purposes let $\Vz \coloneqq \Set{\vz | v \in V}$ be a copy
   of the set of nodes~$V$ and let $I \coloneqq E' \mathbin{\dot{\cup}} \Ve \mathbin{\dot{\cup}} \Vz$ be the index set. We will see that $e \in E'$ correspond to $x'_e$, $v \in V$ to $\l'_v$ and $\vz \in \Vz$ to $\beta_v$, which corresponds
   bijectively to $\fac_v$.
    With $\capo_{\min} \coloneqq \min_{e\in E'} \capo_e$, $\capo_{\max} \coloneqq \max_{e\in E'}
   \capo_e$, and $\b_{\min}^+ = \min_{e \in E'} \b_e^+$ we define
   
   {\begin{align} \label{eqn:defi_of_M}
   M &\coloneqq \max\Set{ 1,\frac{r}{\capo_{\min}}, \frac{r}{\b_{\min}}, \frac{\capo_{\max} \cdot \abs{E'}}{\b_{\min}^+}},\\[0.5 \baselineskip]
   K &\coloneqq \Set{(x', \l', \beta) \in \R^I | \begin{array}{ll} 
   0 \leq x'_e \leq 4M^2 \cdot \capo_e & \text{ for all } e \in E'\\
   0 \leq \l'_v \leq 3M^2 & \text{ for all } v \in \Ve\\
   0 \leq \beta_v \leq \log(2M) & \text{ for all } \vz \in \Vz
   \end{array}},\notag\\[0.5 \baselineskip]
   \Gamma_i(x', \l', \beta) &\coloneqq \begin{dcases}
   \frac{x'_e}{\capo_e \cdot e^{- \beta_v}} - \l'_v  & \text{if } i = e = uv \in E^*,\\
   \max\Set{\l'_u, \frac{x'_e}{\capo_e \cdot e^{- \beta_v}}} - \l'_v  & \text{if } i = e = uv \in E'\backslash E^*,\\
   \sum_{e \in \delta^-(v)} x'_e - \sum_{e \in \delta^+(v)} x'_e & \text{if } i = v \in \Ve\backslash\set{s,t},\\
   \sum_{e \in \delta^-(t)} x'_e - \sum_{e \in \delta^+(t)} x'_e - r & \text{if } i = t \in \Ve,\\
   \l'_{s} - \frac{1}{e^{-\beta_s}}  & \text{if } i = s \in \Ve,\\
   \l'_{v} - \max\limits_{e = v w \in E'} \frac{x'_e}{\b_e^+} & \text{if } i = \vz \in \Vz.
   \end{dcases}\notag
   \end{align}}
   Since $K$ is convex and compact and $\Gamma$ is continuous there exists a
   solution $(x', \l', \beta) \in \SOL(K, \Gamma)$.
     
   \begin{lemma}\label{lem:sol_is_in_the_interior}
   For every solution $(x', \l', \beta) \in \SOL(K, \Gamma)$ we have 
   \begin{enumerate}
   \item $x'_e < 4M^2\cdot \capo_e$ for every arc $e$ \label{it:NCPx},
   \item $\l'_{v} < 3M^2$ for every node $v \in V$,
   \item $\beta_{v} < \log(2M)$ for every node $v \in V \backslash \Set{s}$ with $\sum_{e \in \delta^+(v)} x'_e > 0$.
   \end{enumerate}
   \end{lemma}
   \begin{proof}
   \begin{enumerate}
   \item Suppose there is an arc $e \in E'$ with $x'_e = 4M^2 \cdot \capo_e$. Note that $e^{-\beta_v} \leq 1$ and $\l'_{v} \leq 3M^2$, and therefore, $\Gamma_e(x', \l', \beta) =  \frac{x'_e}{\capo_e \cdot e^{-\beta_v}} - \l'_v$ even if $e \in E'\backslash E^*$. Hence, for $(y, \l', \beta) \in K$ with $y_e \coloneqq 0$, $y_i \coloneqq x'_i$ for $i \in E'\backslash\set{e}$, \eqref{eqn:VI} states that
     $0 \leq -x'_e \cdot \left(\frac{x'_e}{\capo_e \cdot e^{-\beta_v}}-\l'_v\right) 
     \leq 4M^2 \cdot \capo_e \cdot (\l'_v - 4M^2)$.
     But this is a contradiction since $\l'_v - 4M^2 < 0$. 

    \item Using $(x', k, \beta)$ with $k_v = \l'_v$ for $v \neq s$ we obtain with \eqref{eqn:VI} that $(k_s - \l'_s)
    \cdot (\l'_s - \frac{1}{e^{-\beta_s}})  \geq 0$ for all $k_s \in [0, 3M^2]$. Hence, $\l'_s = \frac{1}{e^{-\beta_s}} \leq 2 M < 3M^2$.
   We show that 
   $\sum_{e \in \delta^-(v)}x_e' \leq \sum_{e \in \delta^+(v)}x_e'$ for all $v \in \Ve \backslash\Set{ s, t}$.
   If $\l_v'>0$ this follows from \eqref{eqn:VI} for $(x',k,\beta) \in K$ with $k_u=\l_u'$ for all nodes $u \in V \backslash
   \Set{v}$ and $k_v=0$. For $\l_v'=0$ it holds since \ref{it:NCPx} and \eqref{eqn:NCP} imply
   that $x'_{e}=0$ on all arcs~$e \in \delta^-(v)$. 
   If we define $b(v) \coloneqq \sum_{e \in \delta^+(v)} x_e'- \sum_{e \in \delta^-(v)}x_e'$ for all $v \in V$ the flow $x_e'$
 is a feasible static $b$-transshipment, where $b(v)\geq 0$ for all $v \in V \backslash \Set{t}$. (Note that $s$ has no
   incoming arcs.) Since the graph $G'$ is acyclic and $t$ is the only sink in this $b$-transshipment, we get that
   $\sum_{e \in \delta^+(t)} x_e'= 0$, and therefore the definition of $\Gamma_t$ and \eqref{eqn:VI} imply
   $b(t)\geq -r$. In the following we show that a label of $3M^2$ would induce a flow of $x_e'>r$ on an arc, which is a contradiction.
   Suppose there is a node $w$ with $\l'_w = 3M^2$.
   Since $\l'_s < 3M^2$, there has to be an arc $e=uv$ along an $s$-$w$-path, such that $\l'_u < \l'_v=3M^2$. 
   By \ref{it:NCPx} we can apply \eqref{eqn:NCP} on $\Gamma_e$ to obtain $x_e' \geq \l_v'
   \cdot \capo_e \cdot e^{-\beta_v}
    \geq 3 M^2 \cdot \capo_{\min} \cdot e^{- \log(2M)}
    > M \cdot \capo_{\min} \stackrel{\eqref{eqn:defi_of_M}}{\geq} r$.
   
   Thus, $\l'_v < 3M^2$ for every $v \in V$ and by \eqref{eqn:NCP} it follows flow conservation:
    \begin{equation} \label{eqn:x'_flow_conservation}
   \sum_{e \in \delta^+(v)} x_e'- \sum_{e \in
     \delta^-(v)}x_e' = \begin{cases} r & \text{ if } v = s\\ -r & \text{ if } v = t\\ 0 & \text{ else.} \end{cases}
   \end{equation}
   
   \item Suppose we have $\beta_{v} = \log(2M)$ for some $v \in V$ with $\sum_{e \in \delta^+(v)} x'_e > 0$. For $(x',\l',\gamma)
   \in K$ with $\gamma_u \coloneqq \beta_u$ for all $u \neq v$ and $\gamma_v \coloneqq 0$ we obtain from \eqref{eqn:VI} that
$\l'_{v} \leq \max\limits_{e = v w \in E'} \frac{x'_e}{\b_e^+}$.   
   Let $e_1 =vw$ be an arc that maximizes $\frac{x'_{e}}{\b_{e}^+}$. For $v=s$ we have
     $\l'_s = \frac{1}{e^{-\beta_s}} = 2M \stackrel{\eqref{eqn:defi_of_M}}{>} \frac{r}{\b^+_{\min}} \geq \frac{x'_{e_1}}{\b_{e_1}^+}$, a contradiction.    
   For $v \neq s$ \eqref{eqn:x'_flow_conservation} implies that there is at least one incoming arc $e_2 = uv$ that carries
   $x'_{e_2} \geq \frac{x'_{e_1}}{\abs{\delta^-(v)}} \geq \frac{x'_{e_1}}{\abs{E'}} > 0$ flow. Using \eqref{eqn:NCP} for
   arc $e_2$ yields $\Gamma_{e_2}(x', \l', \beta) = 0$, and therefore we obtain the following contradiction
   \[\l'_v \geq \frac{x'_{e_2}}{\capo_{e_2}\cdot e^{-\beta_v}} 
   \geq \frac{x'_{e_1} \cdot e^{\log(2M)}}{\abs{E'} \cdot\capo_{e_2}} \stackrel{\eqref{eqn:defi_of_M}}{\geq} 
   \frac{x'_{e_1} \cdot 2 \cdot \capo_{\max} \cdot \abs{E'}}{\abs{E'} \cdot \capo_{e_2}  \cdot \b_{\min}^+} 
   > \frac{x'_{e_1}}{\b_{\min}^+} 
   \geq \frac{x'_{e_1}}{\b_{e_1}^+}.\]
   \end{enumerate}
  \end{proof}
   
   \begin{proof}[Proof of \Cref{thm:existence_of_thin_flow}.]
   Let $(x', \tilde \l', \beta)$ be a solution to $\VI(K, \Gamma)$. In
   order to obtain a spillback thin flow we need to make some modifications. Let $V_0 \subseteq V \backslash \set{s}$ be
   the set of nodes with $\sum_{e \in \delta^-(v)} x'_e = \sum_{e \in \delta^+(v)} x'_e  = 0$. We set $\fac_v = 1$ if $v \in V_0$ and $\fac_v = e^{- \beta_v}$ otherwise.   
   Note that we have $\rho_e(\cdot, x'_e, e^{-\beta_v}) = \rho_e(\cdot, x'_e, \fac_v)$ because $\fac_v \neq e^{-\beta_v}$
   implies $x'_e = 0$. Furthermore, let
   \[L \!\coloneqq\! \Set{\!k \!\in\! \R_{\geq 0}^{V} \!|\! k_v \!=\! \tilde \l_v \text{ for } v \in V \backslash V_0 \text{ and } k_v \!\leq\!\! \min_{e = uv \in E'} \rho_e(k_u, x'_e, \fac_v) \text{ for } v \!\in\! V\!}\!.\] 
   Clearly, $\tilde \l' \in L$ since for every $v \in V$ we obtain by
   \eqref{eqn:NCP} applied to $e = uv$ that
   \begin{align*}\tilde \l_v' &\leq \begin{cases} x'_e/(\capo_e \cdot e^{-\beta_v}) & \text{if } e \in E^*\\
   \max \set{\tilde \l'_u, x'_e/(\capo_e \cdot e^{-\beta_v})} & \text{if } e \in E' \backslash E^*
   \end{cases} \\
   &= \rho_e(\tilde \l'_u, x'_e, e^{-\beta_v}) \\
   &= \rho_e(\tilde \l'_u, x'_e, \fac_v).\end{align*}
   So $L$ is non-empty and closed. From the facts that $x'_e$ and $\tilde \l'_s = \frac{1}{e^{-\beta_s}}\leq 2M$ are
   bounded and every node is reachable from $s$ this set is also bounded, i.e., we can define
   $\l' \coloneqq \argmax_{k \in L} \sum_{v \in V} k_v$.
 
   By applying \eqref{eqn:NCP} to the corresponding indices it is easy to check that $(x', \l', \fac)$ indeed satisfies 
   \Cref{eqn:l'_s,eqn:l'_v_min,eqn:l'_v_tight,eqn:l'_v_min_blocked,eqn:l'_v_min_blocked_equal}. The maximality of $\l'$
   in $L$ also guarantees that \eqref{eqn:l'_v_min} is fulfilled for nodes with no in- and outflow.
  \end{proof}

\paragraph{Mixed integer program}
  Spillback thin flows can be computed in practice with a mixed integer program with quadratic constraints.
   In addition to the flow constraints and the conditions \eqref{eqn:l'_s} to
  \eqref{eqn:l'_v_min_blocked_equal} we have to add binary decider variables~$w_e$ for every non-resetting but active arc,
    $y_e$ for every active arc and $z_v$ for every node, where
    \begin{align*}
    w_e=1 \qquad & \Leftrightarrow \qquad  \l_u\geq x'_e/(\capo_e \cdot \fac_v), 
   &&\text{ and thus }\rho_e(x'_e, \l'_u, \fac_v) = \l'_u,\\
    y_e=1 \qquad & \Leftrightarrow \qquad  x_e'=0, 
    &&\text{ and thus \eqref{eqn:l'_v_tight} does not apply,}\\
    z_v=1 \qquad & \Leftrightarrow \qquad  \fac_v=1, 
    &&\text{ and thus \eqref{eqn:l'_v_min_blocked_equal} does not apply.}\qquad \qquad
   \end{align*}
    Since there is no objective function every feasible solution is already a spillback thin flow.

  \paragraph{$\ext$-Extensions.} Let~$\phi \geq 0$ be a fixed point in time. A feasible flow over time with piece-wise
  constant and right-continuous functions~$(f^+, f^-)$ is a \emph{restricted Nash flow over time} on $[0, \phi)$ if it
  is a Nash flow over time for the inflow function~$r_{\phi}(\theta)=r \cdot \mathds{1}_{[0,\phi]}$, where $\mathds{1}$
  is the indicator function. 
  In a Nash flow over time the FIFO principle holds, i.e., no particle entering the network at time~$\theta \geq \phi$
  can influence any particle that has entered the network before time~$\phi$. Thus, all the previous results carry over to
  restricted Nash flows over time. The earliest arrival times~$\l_u(\phi)$ can be determined by taking the left-sided
  limits, which provide us with the current shortest paths network~$G_\phi ' = (V, E'_\phi)$ and the resetting
  arcs~$E^*_{\phi}$. Furthermore, it is possible to determine the spillback arcs~$\bar E_{\phi}$ and the inflow
  bounds~$\b_e^+(\l_u(\phi))$. 
  By \Cref{thm:existence_of_thin_flow} we can obtain a spillback thin flow $(x',\l',\fac)$ on the current shortest paths
  network~$G_\phi '$ with resetting on~$E^*_\phi$ and inflow bounds~$(\b_e^+(\l_u(\phi)))_{e \in E_{\phi}'}$. 
 We set~$x'_e
   \coloneqq 0$ for all~$e \in E \backslash E'_{\phi}$ and extend the
   following functions linearly for some $\ext > 0$:
   \[\l_v(\theta) \coloneqq \l_v(\phi) + (\theta - \phi) \cdot \l'_v \; \text{ and } \;
         x_e(\theta) \coloneqq x_e(\phi) + (\theta - \phi) \cdot x'_e
     \quad \text{ for }\theta \in \:[\phi, \phi + \ext).\]
   Furthermore, the inflow and outflow functions of every arc $e = uv \in E$ are extended by   
    \begin{align*}f^+_e(\theta) &\coloneqq \frac{x'_e}{\l'_u} \quad  \text{for }\theta \in [\l_u(\phi), \l_u(\phi + \ext))\quad\text{ and }\\
    f^-_e(\theta) &\coloneqq \frac{x'_e}{\l'_v} \quad  \text{for }\theta \in [\l_v(\phi), \l_v(\phi + \ext)),\end{align*}
  and the cumulative flow functions~$F_e^+$ and $F_e^-$ are extended accordingly. Note that $\l'_u = 0$ implies that the
  interval~$[\l_u(\phi), \l_u(\phi + \ext))$ is empty, and therefore, $f_e^+$ is not changed in this case. The same is true for $f_e^-$ if $\l_v'=0$. We call the
  family of extended flow functions $(f_e^+, f_e^-)_{e \in E}$ an \emph{$\ext$-extension}.

  \paragraph{Extension step size.} In the following we present some necessary boundaries
  on $\ext$, which we later show to be sufficient for the $\ext$-extension to form a restricted Nash flow over time
  on~$[0, \phi + \ext)$.
%
    Firstly, queues can only deplete until they are empty and, secondly, non active arcs can get active and open alternative routes. Thus, we get the following two conditions on $\ext$ for all $e = uv$:
  \begin{align}
  \l_v(\phi) - \l_u(\phi) + \ext (\l'_v - \l'_u) &\geq \tau_e \;\;\text{ if }e \in E^*_\phi
  \label{eqn:alpha_resetting}\\
  \l_v(\phi) - \l_u(\phi) + \ext (\l'_v - \l'_u) &\leq \tau_e \;\;\text{ if }e \in E \backslash E'_\phi.
  \label{eqn:alpha_others} 
  \end{align}
  In addition, the inflow bounds of the spillback arcs need to be constant within one extension phase, i.e., for all $e = uv \in \bar E_\phi$ we require
  \begin{equation} \label{eqn:alpha_full_arcs}
  \b_e^+(\l_u(\phi) + \theta \cdot \l'_u) = \b_e^+(\l_u(\phi)) \text{ for all } \theta \in [0, \ext).
  \end{equation}
    Finally, the spillback thin flow changes whenever an arc becomes full. Thus, within an extension phase, the total amount of flow on an arc $e = uv \in E'_\phi \backslash \bar E_\phi$ stays strictly under the storage capacity:
  \begin{equation} \label{eqn:alpha_storage}
   d_e(\l_u(\phi + \theta)) < \sigma_e \text{ for } \theta \in [0, \ext).
  \end{equation}
  Note that~$F_e^-$ needs not to be linear on~$[\l_u(\phi), \l_u(\phi + \ext))$.
  We call~$\ext > 0$ \emph{feasible} if it
  satisfies~\Cref{eqn:alpha_resetting,eqn:alpha_others,eqn:alpha_storage,eqn:alpha_full_arcs} and the following lemma shows that such an $\ext$
  always exists.
\begin{lemma}
\label{lem:ext_exists}
For a given restricted Nash flow over time on $[0, \phi)$ there exists a feasible $\ext>0$.
\end{lemma}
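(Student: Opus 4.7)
The plan is to verify each of the four feasibility conditions \eqref{eqn:alpha_resetting}--\eqref{eqn:alpha_storage} individually, exhibit a positive upper bound $\ext_i$ for each, and then set $\ext := \min_i \ext_i$, which is positive since $E$ is finite. Conditions \eqref{eqn:alpha_resetting} and \eqref{eqn:alpha_others} reduce to strict inequalities at $\phi$: for $e = uv \in E_\phi^*$, \Cref{lem:resetting_implies_active}\ref{it:characterization_of_E^*} gives $\l_v(\phi) - \l_u(\phi) > \tau_e$ strictly, and a case split on the sign of $\l'_u - \l'_v$ yields a positive upper bound on $\ext$ (any $\ext > 0$ works if $\l'_v \geq \l'_u$; otherwise $\ext \leq (\l_v(\phi) - \l_u(\phi) - \tau_e)/(\l'_u - \l'_v)$ suffices). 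For $e \in E \setminus E_\phi'$ the dual strict inequality $\l_v(\phi) - \l_u(\phi) < \tau_e$ from \Cref{lem:resetting_implies_active}\ref{it:characterization_of_E'} produces a positive bound symmetrically. Condition \eqref{eqn:alpha_storage} is a pure continuity statement: for $e \in E_\phi' \setminus \bar E_\phi$ one has $d_e(\l_u(\phi)) < \sigma_e$, and since $d_e = F_e^+ - F_e^-$ and $\l_u$ are both Lipschitz continuous, the strict inequality persists on a positive initial interval.

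The main obstacle is \eqref{eqn:alpha_full_arcs}, because $\b_e^+$ is defined piecewise according to whether $e$ is full and also depends on the current outflow rate $f_e^-$, either of which could in principle change instantaneously after $\phi$. Fix a spillback arc $e = uv \in \bar E_\phi$; the condition is trivial if $\l'_u = 0$, so assume $\l'_u > 0$. By \Cref{lem:resetting_implies_active}\ref{it:full_arc_need_time}, $\l_u(\phi) < \l_v(\phi)$, so for $\ext$ small enough the entire probing interval $[\l_u(\phi), \l_u(\phi) + \ext \l'_u)$ lies strictly below $\l_v(\phi)$ and hence inside the range already covered by the restricted Nash flow. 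Since that flow has piecewise constant, right-continuous rate functions with only finitely many breakpoints on any bounded window, a further shrinking of $\ext$ pins $f_e^-$ to the single constant $f_e^-(\l_u(\phi))$ throughout the probing interval.

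It remains to argue that the "full" branch of $\b_e^+$ governs on the entire probing interval, which is the technical heart of the proof. When $f_e^-(\l_u(\phi)) \geq \capi_e$ this is automatic, since both branches of $\b_e^+$ evaluate to $\capi_e$ and the value is insensitive to whether $e$ remains full. When $f_e^-(\l_u(\phi)) < \capi_e$, one invokes the thin flow bound $x'_e/\l'_u \leq \b_e^+(\l_u(\phi))$ from \eqref{eqn:l'_v_min_blocked} to control the extended inflow rate, combined with the storage cap $d_e \leq \sigma_e$ from \Cref{lem:additional_conditions}\ref{it:storage_condition} and the tightness enforced by the spillback factor of the upstream node via \eqref{eqn:l'_v_min_blocked_equal}, to ensure that $f_e^+$ matches $f_e^-$ throughout the probing interval and $d_e$ remains pinned at $\sigma_e$; then $\b_e^+$ keeps the constant value $\min\{f_e^-(\l_u(\phi)), \capi_e\} = \b_e^+(\l_u(\phi))$ as required.
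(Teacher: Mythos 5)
Your treatment of \eqref{eqn:alpha_resetting}, \eqref{eqn:alpha_others} and \eqref{eqn:alpha_storage} is fine and is exactly the paper's argument (strict inequalities from \Cref{lem:resetting_implies_active}~\ref{it:characterization_of_E'},\ref{it:characterization_of_E^*} plus continuity/linearity). The gap is in your "technical heart", the case of a spillback arc $e = uv \in \bar E_\phi$ with $f_e^-(\l_u(\phi)) < \capi_e$. There you assert that \eqref{eqn:l'_v_min_blocked}, \eqref{eqn:l'_v_min_blocked_equal} and the storage cap force $f_e^+ = f_e^-$ on the probing interval, so that $d_e$ stays pinned at $\sigma_e$ and the "full" branch of $\b_e^+$ keeps governing. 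This does not follow: \eqref{eqn:l'_v_min_blocked} only gives the inequality $x'_e/\l'_u \leq \b_e^+(\l_u(\phi)) = f_e^-(\l_u(\phi))$, and \eqref{eqn:l'_v_min_blocked_equal} yields equality only when $\fac_u < 1$ and only for an arc attaining the maximum in that condition, not for every outgoing spillback arc. If $x'_e/\l'_u$ is strictly smaller than $f_e^-(\l_u(\phi))$ (e.g.\ $x'_e = 0$, or $e$ not the maximizer, or $\fac_u = 1$ so that \eqref{eqn:l'_v_min_blocked_equal} is void), then $d_e$ drops strictly below $\sigma_e$ immediately after $\l_u(\phi)$, the arc is no longer full, and the quantity you are tracking jumps from $\min\{f_e^-(\l_u(\phi)),\capi_e\}$ to $\capi_e$ for every choice of $\ext>0$. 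So under the reading of \eqref{eqn:alpha_full_arcs} that you adopt (the inflow bound of the extended flow, including its fullness status), your argument does not close, and no choice of $\ext$ could rescue it in that scenario.

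The paper's own proof is much lighter and sidesteps this: it only uses that $f_e^-$ is piece-wise constant and right-continuous (part of the definition of a restricted Nash flow over time), so that $\b_e^+$ — read on the spillback arcs as the value $\min\{f_e^-(\cdot),\capi_e\}$ determined by the already-fixed outflow on $[\l_u(\phi),\l_v(\phi))$ — is constant on a small initial window, giving $\ext_2 > 0$; it makes no attempt to show that the arc remains exactly full throughout the phase, which is also not what is needed later (the inflow condition only needs the actual bound not to decrease, and the no slack condition only needs the arc attaining the maximum in \eqref{eqn:l'_v_min_blocked_equal}, for which inflow does equal the bound). The correct and useful part of your argument for \eqref{eqn:alpha_full_arcs} is the first half: by \Cref{lem:resetting_implies_active}~\ref{it:full_arc_need_time} the probing interval sits inside $[\l_u(\phi),\l_v(\phi))$, where $f_e^-$ is already determined and locally constant by right-continuity. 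That, in the paper's reading, already finishes the proof; the additional claim that the arc stays full is the step that is both unnecessary and unproven.
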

\begin{proof}
By \Cref{lem:resetting_implies_active} \ref{it:characterization_of_E'} and \ref{it:characterization_of_E^*} we have
$\l_v(\phi) - \l_u(\phi) > \tau_e$ for $e = uv \in E^*_\phi$ and $\l_v(\phi) - \l_u(\phi) < \tau_e$ for $e = uv \in
E\backslash E'_\phi$. Since $F_e^+(\l_u(\phi)) - F_e^-(\l_u(\phi)) = d_e(\l_u(\phi)) < \sigma_e$ for $e = uv \in
E'_\phi \backslash \bar E_\phi$ we can find an $\ext_1>0$ that satisfies
\Cref{eqn:alpha_resetting,eqn:alpha_others,eqn:alpha_storage}. \Cref{lem:resetting_implies_active}~\ref{it:full_arc_need_time} states that $\l_u(\phi)<\l_v(\phi)$ for full arcs and since $f_e^-$ is piecewise-constant and right-continuous on $[\l_u(\phi), \l_v(\phi))$
so is $\b_e^+$. Hence, there is an $\ext_2 > 0$ satisfying \eqref{eqn:alpha_full_arcs}. Clearly,~$\ext
\coloneqq \min\set{\ext_1, \ext_2}>0$ is feasible.
\end{proof}
   For the maximal feasible $\ext$ we call the interval $[\phi, \phi + \ext)$ \emph{thin flow phase}.
  
  \paragraph{Computing Nash flows over time.} The next theorem shows that it is possible to extend a restricted Nash flow
  over time with spillback step by step using $\ext$-extensions. We cannot hope for a polynomial time algorithm, since
  there are examples with exponential number of thin flow phases, see ~\cite{cominetti2017long}, which means that the output is of exponential size.
  Nevertheless, the constructive nature of the $\ext$-extensions leads to an algorithm which might be
  output-polynomial depending on the computational complexity of a spillback thin flow, which is still an open problem.
  
  \begin{restatable}{theorem}{extensionsarenash}
  \label{thm:extensions_are_Nash}
  Given a restricted Nash flow over time on $[0, \phi)$ and a feasible $\ext>0$, the $\ext$-extension is a restricted
  Nash flow over time on $[0, \phi + \ext)$. Furthermore, the extended $\l$- and $x$-functions are indeed the earliest arrival
  times and the underlying static flows for all $\theta \in [0, \phi + \ext)$.
  \end{restatable}
  To prove this we first show that the $\ext$-extension is a feasible flow over time, where the fair allocation
  condition follows from \eqref{eqn:l'_v_min} and \eqref{eqn:l'_v_tight}, the inflow condition from
  \eqref{eqn:l'_v_min_blocked}, and the no slack condition from \eqref{eqn:l'_v_min_blocked_equal}. Furthermore, the no deadlock condition follows since the total transit time of each cycle is positive. To show that the
  extended $\l$-labels correspond to the earliest arrival times we do a quite technical case distinction. Using this the Nash
  flow condition follows immediately. The formal proof can be found in the appendix.

  \Cref{thm:existence_Nash} finally shows the existence of Nash flows over time in the
  spillback setting.
  \begin{theorem}
  \label{thm:existence_Nash}
  There exists a Nash flow over time with spillback.
  \end{theorem}
  \begin{proof}
  The empty flow over time is a restricted Nash flow over time for the empty set $[0,0)$. For a given restricted Nash
  flow over time $f_i$ on $[0, \phi_i)$ we choose a maximal feasible $\ext_i \in (0, \infty]$, which exists due to \Cref{lem:ext_exists}, and extend $f_i$ with \Cref{thm:extensions_are_Nash} to a restricted Nash flow over time $f_{i+1}$ on $[0, \phi_{i+1})$,
  where~$\phi_{i+1}=\phi_i + \ext_i$. This leads to a strictly increasing sequence~$(\phi_i)_{i \in \N}$. Suppose
  this sequence has a finite limit~$\phi_{\infty} \coloneqq \lim_{i \to \infty} \phi_i<\infty$. In this case we define a
  restricted Nash flow over time~$f^\infty$ for $[0, \phi_\infty)$ by using the point-wise limits of the $x$-  and
  $\l$-functions. 
  Note that the functions remain Lipschitz continuous, and therefore, the process can be continued from this limit point.
  Since this enables us to always extend the Nash flow over time, there cannot be an upper bound on the length of the
  extension interval because the smallest upper bound would correspond to a limit point, which we can extend again.  
 \end{proof}  
Experiments suggest that the number of phases is finite, but we were not able to prove this.

\paragraph{Example.} In \Cref{fig:thin_flow_example} we display the spillback thin flows of the introductory example.
 On the left with $\capo_{e_2}=1$ there are two phases,
and thus, two spillback thin flows. In the second phase $e_3$ becomes active and $e_2$ is resetting. For $\capo_{e_2}=2$  (on the right) there are also two phases. In the second phase $e_2$ becomes full and is therefore
a spillback arc with $\b_{e_2}^+ = 2$. Since $\fac_v = \frac{2}{3} < 1$ arc $e_1$ is throttled to an outflow rate of $2$.
\begin{figure}[ht]
\centering \includegraphics[width=\linewidth]{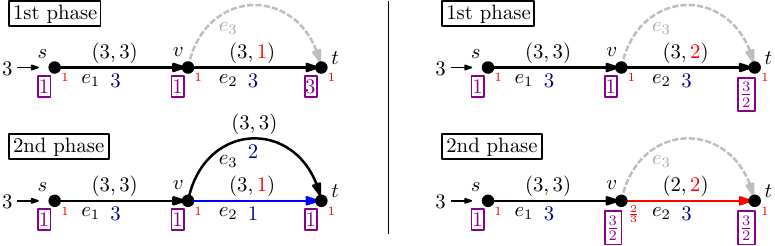} 
\caption{We have $(\b_e^+,\capo_e)$ on top and $x'_e$ on the bottom of each arc, the numbers in the boxes are~$\l'_v$, and the small
numbers are~$\fac_v$. Dashed arcs are non-active and resetting/spillback arcs are blue/red.
} \label{fig:thin_flow_example}
\end{figure}
\section{Kinematic Waves Model} \label{sec:kinematic_wave_model} 
In real traffic situations vehicles cannot immediately enter a fully congested road when someone further down the
street leaves. Instead leaving traffic users create gaps which will be filled after some reaction time by the next
vehicles in line. Hence, it takes time for these gaps to move upstream and only when this free space reaches the entrance of
the road, new cars can enter. This is the key idea of a kinematic wave model and since we consider continuous time and
flow, these gaps will also be represented by a flow over time.
Note hereby, that the speed of the gaps moving upstream is independent of the downstream speed limit (and is in general much lower). 
\Cref{fig:street_model_kinematic_wave} shows how the kinematic waves in a discrete traffic model translate to a
continuous model over time, which we define in the following.
\begin{figure}[ht]
\centering \includegraphics[width=\linewidth]{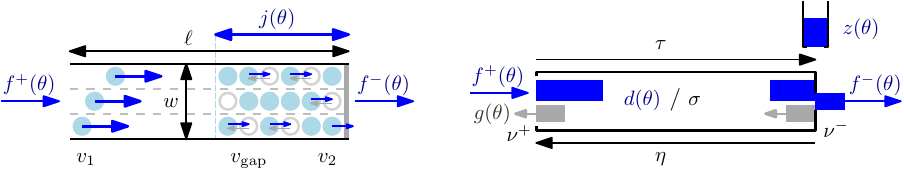} \caption{\emph{Left:} Kinematic wave model of cars using a street
with~$w$~lanes, length~$\l$, speed limit~$v_1$, and exit speed~$v_2$. Gaps within the congestion traverse upstream with a speed
of $v_\text{gap}$. \emph{Right:} The gaps are modeled by a flow over time $g$ traveling upstream with gap transit time
$\eta$. } \label{fig:street_model_kinematic_wave}
\end{figure}

In order to introduce kinematic waves every arc $e=uv$ is equipped with a gap transit time~$\eta_e \geq 0$ and a \emph{gap rate
function}~$g_e: \R \to \R_{\geq 0}$ representing the free spaces between traffic users as a flow
over time traversing upstream from $v$ to $u$. Hereby, $g_e(\theta)$ denotes the gap rate arriving at the tail of the
arc. Note that this gap flow specifies the dynamics of an arc but is no flow through the network, and hence, these
functions do not satisfy flow conservation at the nodes. As gaps are created whenever traffic users leave the link we
set $g_e(\theta) := f_e^-(\theta - \eta_e)$ for $\theta \geq \eta_e$ and $0$ otherwise.
The \emph{total volume of gaps} on the arc is then given by $G_e(\theta) = \int_{\theta}^{\theta+\eta_e} g_e(\xi) \diff
\xi$ .
Since gaps also occupy space on an arc we redefine the \emph{arc load} as $\dwave_e(\theta) \coloneqq F_e^+(\theta) - F_e^-(\theta) + G_e(\theta)$.
Now, we say an arc is full if these redefined arc load reaches the storage capacity. We want that traversing
flow and traversing gaps alone never fill up an arc completely, as this would cause strange pulsing behavior, and furthermore, whenever an arc is full at least some positive gap flow should arrive at the tail. To guarantee this we require the following lower bounds on the storage capacities:
\begin{equation}\label{eqn:storage_condition_wave}
\sigma_e > \capi_e \cdot \tau_e + \max\set{\capi_e, \capo_e} \cdot \eta_e.
\end{equation}
Instead of bounding the inflow rate of a full arc $e = uv$ at time $\theta$ by the outflow at the same time, we
bound it by the gap rate reaching $u$. Hence, we obtain a new definition of the
\emph{inflow bound}:
\[\bwave_e^+(\theta) \coloneqq \begin{cases}
  \min\set{g_e(\theta), \capi_e} & \text{if $e$ is full at time $\theta$,}\\
  \capi_e & \text{else.}
  \end{cases}\]
 As a positive side effect of the kinematic wave model we can relax the no deadlock condition or in most natural
 instances completely remove it: We say a flow over time satisfies the \emph{relaxed no deadlock condition} if at each
 point in time the set of arcs $e$ that are full and have $\eta_e = 0$ is cycle free. Note that, in realistic traffic
 networks $\eta_e$ is positive on every link of positive length (i.e., $\tau_e > 0$), which means that the network does
 not have any directed cycles where all arcs have $\eta_e = 0$, which means that the relaxed no deadlock condition is
 always satisfied.

  Finally we say a flow over time is \emph{feasible} in the kinematic wave model if it satisfies the inflow
  conditions, the fair allocation condition, the no slack condition and the relaxed no deadlock condition,
  where we use $\bwave_e^+$ as inflow bound and we say an arc is full when $\dwave_e(\theta) \geq \sigma_e$.
  
  \begin{remark}
  The kinematic wave model is a generalization of the spillback model, because if we choose all backwards transit times $\eta_e = 0$ then the arc loads, the inflow bounds and even the (relaxed) no deadlock condition coincide in both models.
  \end{remark}

  In order to consider gaps in traffic congestion over multiple arcs we need the following definition.
  A \emph{congestion suffix at time $\theta_1$} is a
  path $(e_1, \dots, e_k)$ such that for all $i \in [k-1]$ we have that $e_i$ is full at time $\theta_i$ with
  $f_{e_i}^+(\theta_i) = \bwave_{e_i}^+(\theta_i)$ and was throttled at time $\theta_{i+1}$, where $\theta_{i+1}
  \coloneqq \theta_i - \eta_{e_i}$. Furthermore, arc~$e_k$ is not full at time $\theta_k$ or was not throttled at time $\theta_k - \eta_{e_k}$, but also has $f_{e_k}^+(\theta_k) = \bwave_{e_k}^+(\theta_k)$. 

\begin{restatable}{lemma}{kinematicwavelemma} \label{lem:kinematic_wave_lemma}
 For a feasible flow over time $f$ in the kinematic wave model we have for all $\theta \in [0, \infty)$:
\begin{enumerate}
\item Storage condition: $\dwave_e(\theta) \leq \sigma_e$. \label{it:storage_condition_wave}
\item If $e$ is full at time $\theta$, we have $z_e(\theta) > 0$. \label{it:arc_saturation_wave}
\item If $e$ is full at time $\theta$, we have $z_e(\theta - \eta_e) > 0$. \label{it:arc_saturation_wave_time_shifted}
\item Every arc that is full at $\theta$ with $f_e^+(\theta) = \bwave_e^+(\theta)$ is part of a (finite) congestion suffix. \label{it:full_arcs_jam_suffix}
\item There is a function $\epsilon\colon [0, \infty) \to (0, 1)$ depending only on the network but not on $f$ such that every arc $e$ with $z_e(\theta)>0$ satisfies $f_e^-(\theta) \geq \epsilon(\theta)$ and $\bwave_e^+(\theta) \geq \epsilon(\theta)$. \label{it:q_well-defined_wave}
\end{enumerate}
\end{restatable}
The proofs of \ref{it:storage_condition_wave} to \ref{it:full_arcs_jam_suffix} follow  straight-forwardly from the
definitions and \ref{it:q_well-defined_wave} is proven similar to \Cref{lem:q_well-defined}. But due to
the relaxed deadlock condition we cannot guarantee an outflow of at least $\epsilon > 0$ with $\epsilon$ depending only
on the network. The formal proofs can be found in the appendix.

\begin{figure}[b]
\centering \includegraphics[width=\linewidth]{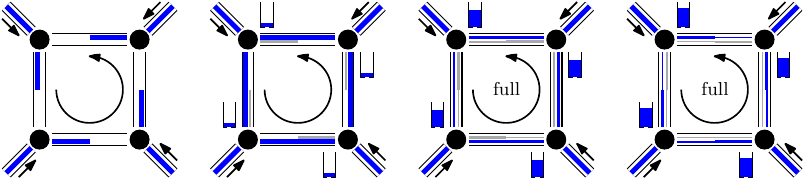} 
\caption{A feasible deadlock at different points in time (from left to right). The gap flow becomes smaller and smaller after the cycle is full, as the outflow rate of an arc in the cycle competes for capacity with the flow rate entering the cycle. For this reason the inflow bound, and hence the inflow rate, also decreases. The flow volume on an arc (without gaps) converges to $\sigma$. The circulating flow converges to $0$ but it remains positive forever. Note that the queues of the non-cycle arcs grow unbounded, which leads to waiting time of $\infty$.
} \label{fig:feasible_deadlock}
\end{figure}

In the kinematic wave model it might be possible that all flow tries to reach a directed cycle and
wants to cycle there forever. In such a \emph{feasible deadlock} (depicted in \Cref{fig:feasible_deadlock}) the cycle is full from some point in time
onward, while the amount of flow in the cycle only converges asymptotically to the total storage capacity but never
reaches it. This means the total amount of gap flow converges to $0$ but always stays strictly positive. Hence,
for all times there is a positive (but decreasing) in- and outflow rate at each arc within the cycle.
Unfortunately, this might cause some
degeneracy since flow waiting to enter the cycle might experience an infinite waiting time.
In the kinematic wave model we set $q_e(\theta) \coloneqq \infty$ if $\Set{ q  \geq 0| \int_{\theta + \tau_e}^{\theta  + \tau_e + q} f_e^-(\xi) \; \diff \xi 
  = z_e(\theta  + \tau_e)} = \emptyset$.
This can happen if $f_e^-(\xi)$ goes quickly to~$0$ for $\xi \to \infty$ such that $\int_{\theta+\tau_e}^{\infty} f_e^-(\xi) \diff \xi < z_e(\theta + \tau_e)$. Consequently, the exit times $T_e$, as well as, the earliest arrival times $\l_e$ can in principle be infinity. Below in \Cref{thm:Nashflows_dont_have_deadlocks} we show that this does not happen for Nash flows over time.

Note that all statements of \Cref{lem:technical_properties} also hold for the kinematic wave model except for Lipschitz property in \ref{it:q_is_lipschitz}. This is mainly used to show that the functions $q_e, \l_v, T_e : [0, \infty) \to [0, \infty]$ are almost everywhere differentiable, which we now show differently in the following lemma.
\begin{lemma} \label{lem:q_diffbar}
For all $e \in E$ the function $q_e$ is differentiable at almost all $\theta$ with $q_e(\theta) < \infty$. The same holds for all $\l_v$ and $T_e$.
\end{lemma}
\begin{proof}
Since by \Cref{lem:technical_properties}~\ref{it:T_monoton} $T_e$ is monotonically increasing on the set $\set{\theta \in [0, \infty) | T_e(\theta) < \infty}$ Lebesgue's theorem for the differentiability of monotone functions states that $T_e$ is almost everywhere differentiable. The same is then true for $q_e(\theta) = T_e(\theta) - \tau_e - \theta$ and $\l_v$ as a minimum of $T_e$ functions.
\end{proof}
Furthermore, \Cref{lem:q'} also holds for almost all $\theta$ with  $q_e(\theta) < \infty$.
To adapt the definition of full arcs from the perspective of some particle $\theta$, we define \[\Ewave_\theta \coloneqq \Set{e = uv \in E |
\dwave_e(\l_u(\theta)) = \sigma_e}.\]
All definitions and statements from \Cref{sec:Nash_flows,sec:constructing_nash_flows}, namely the
definition of a Nash flow over time and of spillback thin flows, as well as \Cref{lem:x_well-defined,lem:ext_exists}, \Cref{thm:existence_of_thin_flow,thm:extensions_are_Nash}, seamlessly translate to the kinematic wave model by
replacing $d_e$ by $\dwave_e$, $\b_e^+$ by $\bwave_e^+$, and $\bar E_\theta$ by $\Ewave_\theta$. There are only a couple of changes, which we discuss in the following.
While all statements of \Cref{lem:resetting_implies_active} transfer to the kinematic wave model, \ref{it:full_subset_active} is proven differently:
\begin{lemma} \label{lem:full_subset_active_wave}
Given a Nash flow over time in the kinematic wave model for all $\theta$ we have $\Ewave_\theta \subseteq E'_\theta$.
\end{lemma}
\begin{proof}
For $e \in \Ewave_\theta$ \Cref{lem:kinematic_wave_lemma}~\ref{it:arc_saturation_wave_time_shifted} states $z_e(\l_u(\theta)-\eta_e) > 0$.
 Therefore, by continuity of $z_e$ and \Cref{lem:kinematic_wave_lemma}~\ref{it:q_well-defined_wave} we have that $f_e^-(\xi) > 0$ for all $\xi \in
 [\l_u(\theta) -\eta_e - \delta, \l_u(\theta) - \eta_e]$ for some small~$\delta > 0$. Considering the amount of flow that has left the arc and whose gap flow also has left the arc we obtain for all $\epsilon > 0$ that
 \begin{align*}F_e^-(\l_u(\theta)) - G_e(\l_u(\theta)) &= \int_0^{\l_u( \theta ) - \eta_e} f^-_e(\xi) \diff \xi \\
 &> \int_0^{\l_u(\theta) - \eta_e - \epsilon} f^-_e(\xi) \diff \xi\\ 
 &= F_e^-(\l_u(\theta) - \epsilon) - G_e(\l_u(\theta) - \epsilon).\end{align*}
 This together with $\dwave_e(\l_u(\theta)) = \sigma_e \geq \dwave_e(\l_u(\theta) - \epsilon)$ yields
 \begin{align*}
 F_e^+(\l_u(\theta)) 
 &= \dwave_e(\l_u(\theta)) + F_e^-(\l_u(\theta)) - G_e(\l_u(\theta))\\ 
 &> \dwave_e(\l_u(\theta)-\epsilon) + F_e^-(\l_u(\theta)-\epsilon)- G_e(\l_u(\theta) - \epsilon)= F_e^+(\l_u(\theta) - \epsilon).
 \end{align*} 
 Hence, \Cref{lem:x_well-defined}~\ref{it:F_increasing_means_active} implies $e \in E'_{\theta}$.
\end{proof}

\Cref{thm:existence_Nash} also holds, because we can again take the point wise limits of the $x$- and $\l$-functions which stay monotone, and therefore, almost everywhere differentiable.
The only thing left to show is, that a Nash flow over time will not produce a deadlock, which would mean infinite waiting times, and therefore, infinite $\l_v$ labels.
\begin{theorem} \label{thm:Nashflows_dont_have_deadlocks}
  Given a Nash flow over time in the kinematic wave model, we have $q_e(\theta), T_e(\theta), \l_v(\theta) < \infty$ for all $\theta \in [0, \infty)$.
\end{theorem}
\begin{proof}
We first show that $\l_v(\theta) < \infty$. Assume for
contradiction there is a minimal $\theta_0$ where some $\l$-label is infinity. Since $\l_s(\theta_0) = \theta_0$ there
has to be some arc $e = uv$ with $\l_u(\theta_0) < \infty = \l_v(\theta_0)$, and thus, $q_e(\l_u(\theta_0)) = \infty$.
This is only possible if $f_e^-(\xi) \to 0$ for $\xi \to \infty$. In other words, for some $\xi_0 > 0$ arc~$e$ would be
throttled for all times $\xi > \xi_0$. Consequently, for each $\xi \geq \xi_0$ there has to be some arc $e' \in \delta_v^+$
that is full due to the no slack condition. 
For some $\epsilon < \min\set{\sigma_{e'} - \capo_{e'} \cdot \eta_{e'} | e' \in E} / r$ we consider the particle $\theta_1
\coloneqq \theta_0 - \epsilon$ for which $\l_t(\theta_1) < \infty$. At time $\l_t(\theta_1)$ all particles $[0,
\theta_1)$ have left the network, and therefore, the amount of flows in the network that is in front of particle $\theta_0$ equals $\epsilon \cdot r$ at this point in time. But this is a contradiction since for every arc~$e' \in \delta_e^+$ that is full at time $\l_t(\theta_1)$ we have
\[\sigma_{e'} = \dwave_{e'}(\l_t(\theta_1)) \leq \epsilon \cdot r + G_{e'}(\l_t(\theta_1)) < \epsilon \cdot r + \capo_e \cdot \eta_e < \sigma_{e'}.\]
Hence, $\l_v(\theta) < \infty$ for all $v \in V$ and all $\theta$. It follows that this also holds for $q_e(\theta)$ and $T_e(\theta)$ for all $e \in E$ and all~$\theta$.
\end{proof}

This theorem ensures that the current shortest path network in every Nash flow over time is acyclic, and thus, \Cref{thm:nash_flow_derivatives_are_thin_flow} translates without any change to the kinematic wave model.

In order to see that this gap flow over time does indeed model the typical kinematic wave phenomena we consider the
following example depicted in \Cref{fig:example_kinematic_wave}. Kinematic waves are best to observe with temporary
bottlenecks such as time-limited lane closures or traffic lights. After the bottleneck at $v_5$ is removed we observe
that the traffic congestion travels upstream and levels out after some time (like surging wave), just as we
would expect it to be in reality. Note that the arcs 
further ahead stay full also at later point in time, even though, the queues decreased and
the congestion moved upstream. But from some point in time onwards the inflow of the preceding arcs are not throttled anymore, and therefore, the flow over time would be exactly the same as if these arcs were not full.

\begin{figure}[ht]
\centering \includegraphics[width=\linewidth]{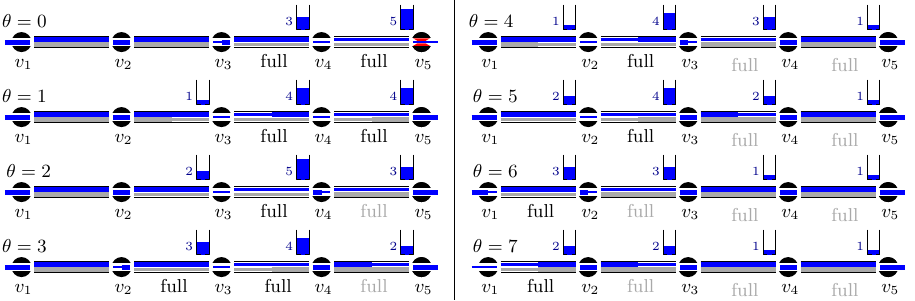} \caption{For every arc $e$ we have $\capi_e = \capo_e = \tau_e =
\eta_e = 2$ and $\sigma_e = 9$. Additionally, we have a constant inflow rate at $v_1$ of $2$ and due to a temporary
bottleneck (red) with capacity $1$ at $v_5$ we start with a congestion between $v_3$ and $v_5$. At $\theta = 0$ the
bottleneck is removed such that the flow leaves $v_5$ with a rate of $2$. Blue lines on top represent the traffic flow
from left to right and gray lines on the bottom are the gaps going from right to left. Thin lines represent a flow with
rate $1$ and fat lines a flow with rate $2$. Arcs that are full, but do not throttle the preceding arcs, are labeled
with a gray ``full''. Within a node we depicted the push through rate.} \label{fig:example_kinematic_wave}
\end{figure}

\begin{remark}
According to \cite{flotterod2016queueing} there is a set of properties that suffices for full consistency with a
kinematic wave model. It is easy to check that our extended model indeed satisfies these conditions when considering
infinitesimal time steps. By taking the limit these conditions describe exactly the derivatives of the arc loads (50.8),
the derivatives of the queues (50.9), the inflow bounds (50.10) and the push rates (50.11). Similar the node properties
are fulfilled as well, since our model implements flow conservation, the FIFO principle, the fair allocation condition
and the no slack condition.
\end{remark}

\section{Relation to the Koch-Skutella-model}
\label{sec:koch_skutella}
\begin{proposition} \label{prop:generalization}
The spillback model, and therefore also the kinematic wave model, is a generalization of the Koch-Skutella-model. If we disable the inflow and the storage capacity, the constructed Nash flow over time in both models coincide.
\end{proposition}

\begin{proof}    
Assume we are given an instance of the Koch-Skutella-model as described in \cite{koch2010nash}, which is a network~$G = (V,E)$ with source~$s$,
sink~$t$ and arcs equipped with transit times~$\tau_e$ and outflow capacities~$\capo_e$. Then, for the spillback model
we keep the network and choose, additionally, storage capacities~$\sigma_e = \infty$ and inflow capacities bigger than the
total outflow capacity of the preceding arcs~$\sum_{e \in \delta^-(u)}\capo_{e}$. This ensures that spillback never
occurs.
Assume for contradiction that there is a node~$v$ and a time~$\theta$ such that the spillback
factor~$\fac_v(\theta)$ is strictly smaller than~$1$. The maximality of $\fac_v(\theta)$ and the fair allocation
condition imply that there is an arc~$e = uv$ with $f_e^-(\theta)<b_e^-(\theta)$, i.e., $e$ is throttled. Due to the no
slack condition there has to be an outgoing arc~$e' = vw$ with $f^+_{e'}(\theta)=b^+_{e'}(\theta)$. This is a
contradiction, because $e'$ can never be full and the inflow capacity is always greater than the inflow rate.

Substituting $\fac_v$ by 1 in the spillback thin flow conditions shows that
\eqref{eqn:l'_v_min_blocked_equal} can be omitted and \eqref{eqn:l'_v_min_blocked} is irrelevant for~$b_e^+$ large
enough. Hence, a spillback thin flow matches a thin flow with resetting as it is stated in the paper 
of~\cite{cominetti2011existence}. Furthermore, we obtain the same bounds on $\ext$ since $\sigma_e = \infty$, and therefore,
the conditions \eqref{eqn:alpha_full_arcs} and \eqref{eqn:alpha_storage} never apply. This shows that a Nash flow over
time with spillback in this network equals a Nash flow over time in the Koch-Skutella-model, and therefore, the spillback
model is indeed a generalization of the Koch-Skutella-model.   
\end{proof}

In the Koch-Skutella-model it was shown that the labels of a thin flow with resetting are unique \cite{cominetti2011existence}, which is not the case with spillback.

\begin{proposition}
A spillback thin flow is not unique, neither is a Nash flow over time with spillback.
\end{proposition}
\begin{proof}
 Consider the example depicted in \Cref{fig:not_uniqueness}. Note that every
convex combination of the displayed spillback thin flows (A and B) is also a valid spillback thin flow. Hence, there is
a continuous amount of Nash flows over time for this instance.
\begin{figure}[t]
\centering \includegraphics[width = \linewidth]{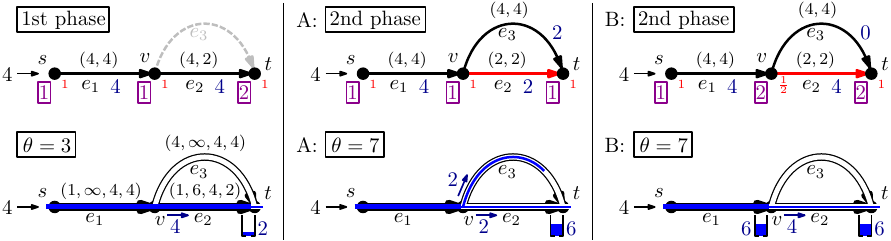} 
\caption{The arc properties (bottom left) are given by $(\tau_e,\sigma_e,
\capi_e,\capo_e)$. For the spillback thin flows (top) we have $(\b_e^+,\capo_e)$ on top and $x'_e$ on bottom right of
each arc, $\l'_v$ in the box under each nodes and the small numbers are~$\fac_v$. Dashed arcs are non-active and
spillback arcs are red. In this instance arc~$e_3$ becomes active exactly at the same time when $e_2$ gets full. This
happens when the particle entering at $\theta = 3$ reaches $v$. There are multiple valid spillback thin flows for the
second phase (bottom, A and B) leading to different Nash flow over time (top, A and B).} \label{fig:not_uniqueness}
\end{figure}
\end{proof}
In the Koch-Skutella-model it is still an open question to find a general bound on the price of anarchy. For some
special cases, some bounds are known; see \cite{bhaskar2015stackelberg,correa2019price}. 
\begin{proposition}
The price of anarchy for Nash flows over time with spillback is unbounded.
\end{proposition}

\begin{proof}
This can be seen by considering a variation of the example in \Cref{fig:example} on the right side. Setting
$\tau_{e_2} = 0$, $\capo_{e_2} =  \sigma_{e_2} = \epsilon$. The Nash flow over time will never use arc $e_3$, and
therefore, the flow arrives with a rate of $\epsilon$. The the social optimum, always
uses $e_3$ in addition, and does not build any queues. Hence, the price on anarchy is larger than $\frac{1}{\epsilon}$.
\end{proof}

\section{Conclusion and Outlook} \label{sec:outlook} 
We extended the Koch-Skutella-model, studied by~\cite{cominetti2011existence,cominetti2015dynamic,koch2010nash}, by spillback and kinematic waves and showed that most of the results achieved for the original model carry over. The existence of equilibria can be shown in a similar way and the structure of a Nash flows over time consists again of a sequence of static flows. 
Furthermore, \cite{cominetti2017long} showed for the Koch-Skutella-model that if the inflow rate~$r$ does not exceed the capacity of a minimal cut, then the lengths
of all queues are bounded. This is not the case in the spillback model, as the example on the right of
\Cref{fig:example} shows. But it is still possible that there exists a phase in the computation of a Nash flow over time
that lasts indefinitely.
It remains open to characterize such long term behavior and to give a bound on the number of
phases. Furthermore, it is a challenging open problem in the original as well as in the spillback model to compute a
thin flow efficiently or to show any hardness results. Since Nash flows over time are intended to describe traffic
situations, a model with multiple origin-destination-pairs would be a huge step. This is difficult since
the earliest arrival times differ for every commodity.

\bibliographystyle{plain}%
\bibliography{literature}

  \newpage
\section{Appendix - Technical Proofs}
\label{sec:appendix}

\technicalproperties*
\begin{proof}\phantom{m}
\begin{enumerate}
\item[\ref{it:fifo}] We have that $q_e(\theta)$ is by definition the first point in time such that
$F_e^-(\theta + \tau_e +q_e(\theta)) -F_e^-(\theta +\tau_e) = z_e(\theta + \tau_e) = F_e^+(\theta) -F_e^-(\theta +\tau_e)$.
Hence, $F_e^-(T_e(\theta))= F_e^+(\theta)$.
\item[\ref{it:q_equiv_z}] This follows directly by the definition of $q_e$.
\item[\ref{it:equal_exit_times}] Intuitively, this holds true since for a particle entering the end of a queue, the entering time does not influence
the time to leave the queue, if no other particle enters the queue in between the two times and if the queue does not
empty out. Formally, this follows by
 \begin{align*}
    q_e(\theta_1) &= \min \Set{ q \geq 0 | \begin{aligned}\int_{\theta_1 + \tau_e}^{\theta_2 + \tau_e} f_e^-(\xi) \diff\xi + &\int_{\theta_2 + \tau_e}^{\theta_1 + \tau_e +q} f_e^-(\xi) \diff\xi \\&= F_e^+(\theta_1) -F_e^-(\theta_1 + \tau_e) \end{aligned}}\\
    &= \min \Set{ p \!=\! q \!-\! \theta_2 \!+\! \theta_1 \!\geq\! 0 | \begin{aligned}&\int_{\theta_2 + \tau_e}^{\theta_2 + \tau_e +p} f_e^-(\xi) \diff\xi 
    \\&= F_e^+(\theta_2) -F_e^-(\theta_2 \!+\! \tau_e) \end{aligned}} \!+\! \theta_2 \!-\! \theta_1\\
    &=q_e(\theta_2) + \theta_2 - \theta_1.
  \end{align*}

Thus, $T_e(\theta_1) = \theta_1 + \tau_e +q_e(\theta_1)=\theta_2 + \tau_e +q_e(\theta_2) = T_e(\theta_2)$.
\item[\ref{it:no_inflow}] To show the contraposition assume $F_e^+(\theta+q_e(\theta))- F_e^+(\theta) > 0$. We have
\[z_e(T_e(\theta))=F_e^+(\theta + q_e(\theta))-F_e^-(T_e(\theta)) \stackrel{\ref{it:fifo}}{=} 
F_e^+(\theta + q_e(\theta)) - F_e^+(\theta) >0.\]
Thus, \Cref{lem:q_well-defined} implies $f_e^-(T_e(\theta)) > \epsilon$.
\item[\ref{it:maxoutflow_depending_on_inflow}] This follows by $z_e(T_e(\theta)) = F_e^+(\theta+q_e(\theta)) - F_e^+(\theta)$ and the definition of $\b_e^-(\theta)$. 
\item[\ref{it:positive_queue_while_emptying}] 
 By definition, $q_e(\theta)$ is the minimal such that $F^-_e(\theta+\tau_e + q_e(\theta))=F_e^+(\theta)$, and therefore $F_e^+(\theta) -
F^-_e(\theta+\tau_e + \xi) > 0$ for $\xi \in [0,q_e(\theta))$. Since $F_e^+$ is monotonically increasing we have for all $\xi
\in [0,q_e(\theta))$ that
\[z_e(\theta + \tau_e + \xi)=F_e^+(\theta+\xi)-F^-_e(\theta+\tau_e + \xi) \geq F_e^+(\theta)-F^-_e(\theta+\tau_e + \xi) > 0.\]

\item[\ref{it:T_monoton}] Consider two points in time $\theta_1 < \theta_2$. Since $F_e^+$ is monotonically increasing, \ref{it:fifo} implies
\begin{equation} \label{eqn:outflow_with_T_monoton}
F_e^-(T_e(\theta_2))=F_e^+(\theta_2)\geq F_e^+(\theta_1) = F_e^-(T_e(\theta_1)). 
\end{equation}
If \eqref{eqn:outflow_with_T_monoton} holds with strict inequality, we obtain by monotonicity of $F_e^-$ that $T_e(\theta_1) < T_e(\theta_2)$. If \eqref{eqn:outflow_with_T_monoton} holds with equality we
have two cases. If $z_e(\theta_2 + \tau_e)>0$, \ref{it:equal_exit_times} states that
$T_e(\theta_1)=T_e(\theta_2)$. If $z_e(\theta_2 + \tau_e)=0$ \ref{it:positive_queue_while_emptying} applied to $\theta_1$ implies $\xi\coloneqq \theta_2-\theta_1 \not\in [0, q_e(\theta_1))$.
Thus, $T_e(\theta_2) \stackrel{\ref{it:q_equiv_z}}{=} \theta_2 + \tau_e \geq \theta_1 + \tau_e + q_e(\theta_1) 
= T_e(\theta_1)$.
That $\l_v(\theta)$ is also monotone follows immediately since it is the minimum of monotone functions.

\item[\ref{it:q_is_lipschitz}] Consider two points in time $\theta_1 < \theta_2$. By \ref{it:T_monoton}, it holds that
$q_e(\theta_2)-  q_e(\theta_1) = T_e(\theta_2) - \theta_2 - T_e(\theta_1) + \theta_1 
\geq \theta_1 - \theta_2$. To obtain an uppper bound we set $\theta_0 \coloneqq \max\set{\theta \in [\theta_1, \theta_2] |
q_e(\theta) = 0}$ if there exists a $\theta \in [\theta_1, \theta_2]$ with
$q_e(\theta) = 0$, and $\theta_0 = \theta_1$ otherwise. With \ref{it:positive_queue_while_emptying}  we get that $z_e(\theta)
> 0$ for all $\theta \in (T_e(\theta_0), T_e(\theta_2))$ and, more importantly,  $q_e(\theta_2) - q_e(\theta_1) \leq
q_e(\theta_2) - q_e(\theta_0)$. 
By \Cref{lem:q_well-defined} we have $F_e^-(T_e(\theta_2)) - F_e^-(T_e(\theta_0)) \geq \epsilon \cdot \left( T_e(\theta_2) - T_e(\theta_0)\right)$, and thus \ref{it:fifo} implies
\begin{align*}
\capi_{e} \left( \theta_2 - \theta_0 \right) &\geq F_e^+(\theta_2) - F_e^+(\theta_0) \\
&= F_e^-(T_e(\theta_2)) - F_e^-(T_e(\theta_0))  \\
&\geq \epsilon \cdot (\theta_2 + q_e(\theta_2)
- \theta_0 - q_e(\theta_0)).
\end{align*}
Finally, we have $q_e(\theta_2) - q_e(\theta_1) \leq q_e(\theta_2) - q_e(\theta_0) \leq \left( \frac{\capi_{e} - \epsilon}{\epsilon} \right) \left( \theta_2 - \theta_0 \right)\leq \left( \frac{\capi_{e} - \epsilon}{\epsilon} \right) \left( \theta_2 - \theta_1 \right)$,
which shows that $q_e$ is
Lipschitz continuous. That $T_e$ and $\l_v$ are Lipschitz continuous follows immediately.
\end{enumerate}
\end{proof}
 
 \xwelldefined*
\begin{proof}
 For the proof of $\ref{it:nash_flow} \Leftrightarrow \ref{it:in_equals_out_at_l}$ see \cite[Theorem 1]{cominetti2015dynamic}.
 
\smallskip $\ref{it:nash_flow} \Rightarrow \ref{it:F_increasing_means_active}$: Suppose $F_e^+(\l_u(\theta) -
 \epsilon) < F_e^+(\l_u(\theta))$ for all $\epsilon > 0$. Since $F_e^+(\l_u(\theta)) > 0$ the Nash flow condition
 implies that $e$ was part of the current shortest paths network at some point in time before $\theta$. Let $\theta'
 \leq \theta$ be the last point in time with $e \in E_{\theta'}'$. Since $e$ was not in the current shortest paths
 network in-between $\theta'$ and $\theta$ there is no inflow during $[\l_u(\theta'), \l_u(\theta)]$, i.e.,
 $F_e^+(\l_u(\theta)) - F_e^+(\l_u(\theta'))=0$. This implies by the assumption that $\l_u(\theta') = \l_u(\theta)$, and
 therefore by \eqref{eqn:bellman} and the monotonicity of $\l_v$ we have
 $\l_v(\theta) \leq T_e(\l_u(\theta))=T_e(\l_u(\theta'))= \l_v(\theta') \leq \l_v(\theta)$.
 Thus, we have equality implying $e \in E'_{\theta}$. 
       
 \smallskip $\ref{it:F_increasing_means_active} \Rightarrow \ref{it:in_equals_out_at_l}$: For $e \in E'_\theta$ we have
 by \Cref{lem:technical_properties}~\ref{it:fifo} that $F_e^+(\l_u(\theta)) = F_e^-(T_e(\l_u(\theta))) =
 F_e^-(\l_v(\theta))$.
 For $e \not \in E'_\theta$, let $\theta_0 \in [0, \theta)$ be minimal with $F_e^+(\l_u(\theta_0)) =
 F_e^+(\l_u(\theta))$, which exists due to the contraposition of \ref{it:F_increasing_means_active}. If $\theta_0 > 0$
 then due to minimality $F_e^+(\l_u(\theta_0) - \epsilon) < F_e^+(\l_u(\theta_0))$ for all $\epsilon > 0$, and therefore
 by \ref{it:F_increasing_means_active} $e$ is active for $\theta_0$. It follows from the observation above, from the
 monotonicity of $F_e^-$ and $\l_v$, as well as, from \Cref{lem:technical_properties}~\ref{it:fifo} that
 \begin{align*}F_e^+(\l_u(\theta)) &= F_e^+(\l_u(\theta_0)) = F_e^-(\l_v(\theta_0)) \\
 &\leq F_e^-(\l_v(\theta)) 
 \leq F_e^-(T_e(\l_u(\theta))) = F_e^+(\l_u(\theta)).\end{align*}
 For $\theta_0 = 0$ we have 
 \[0 \leq F_e^-(\l_v(\theta)) \leq F_e^-(T_e(\l_u(\theta))) = F_e^+(\l_u(\theta)) = F_e^+(\l_u(\theta_0)) = 0.\]
 In both cases we have $F_e^+(\l_u(\theta)) = F_e^-(\l_v(\theta))$.
\end{proof}

\extensionsarenash*
\begin{proof}
    Obviously $f_e^-$ and $f_e^+$ are bounded, piece-wise constant, and right-continuous. All conditions
    are fulfilled on $[0,\phi)$ as well as on $[\phi+\ext,\infty)$ since nothing has changed on this intervals. Note
    that in the first part of the proof we use the linearly extended $\l$-labels and we show only in the end that they
    are indeed the earliest arrival times.
    
    \paragraph{Flow conservation.}
    For $\l'_v > 0$ we obtain for all $v \in V \backslash \set{t}$ and all $\theta \in [\l_v(\phi), \l_v(\phi + \ext))$ that
    \begin{align*}
    \sum_{e \in \delta^+(v)} f_e^+(\theta) - \sum_{e \in \delta^-(v)}f_e^-(\theta) 
    &= \sum_{e \in \delta^+(v)} \frac{x'_e(\theta)}{\l'_v} - \sum_{e \in \delta^-(v)}\frac{x'_e(\theta)}{\l'_v}\\
    &= \begin{cases}
    0 & \text{if } v \in V \backslash \Set{s, t}\\
    r & \text{if } v=s.
    \end{cases}\end{align*} Note that $\l'_s=\frac{1}{\fac_s} = 1$ by the assumptions on arcs in $\delta^+(s)$. For $\l'_v = 0$ we have $[\l_v(\phi), \l_v(\phi + \ext)) = \emptyset$.
  \paragraph{$x$ is well-defined.}
    For all $\xi \in [0, \ext)$ we have
    \begin{equation} \label{eqn:x_well_defined}
    \begin{aligned}
    F_e^+(\l_u(\phi + \xi))&= x_e(\phi) + \int_{\l_u(\phi)}^{\l_u(\phi)+\xi \cdot \l_u'} f_e^+(\theta)\diff\theta
    = x_e(\phi) +\xi \cdot x_e' = x_e(\phi + \xi) \\
    F_e^-(\l_v(\phi + \xi))&= x_e(\phi) + \int_{\l_v(\phi)}^{\l_v(\phi)+\xi \cdot \l_v'} f_e^-(\theta)\diff\theta
        = x_e(\phi) +\xi \cdot x_e' = x_e(\phi + \xi).
    \end{aligned}
    \end{equation}  
  \paragraph{Fair allocation condition.} 
    For every arc $e = uv$ we have to show that \[f_e^-(\l_v(\phi + \xi))=\min \Set{\b_e^-(\l_v(\phi + \xi)), \fac_v \cdot \capo_e}\] for $\xi \in [0, \ext)$.
    This is obvious for $\l'_v = 0$, so we assume $\l'_v > 0$.
  
  \paragraph{Case 1:} $x'_e = 0$.  
    Either $e$ is not active or it is active but $\l'_v > 0$ and \eqref{eqn:l'_v_min} implies that $e$ is not resetting.
    Either way $z_e(\l_u(\phi) + \tau_e) = 0$ and since $f_e^+(\l_u(\phi + \xi)) = 0$ the queue stays empty. We have $f_e^-(\l_v(\phi + \xi)) = \frac{x'_e}{\l'_v} = 0$ and
    $\b_e^-(\l_v(\phi + \xi)) = f_e^+(\l_v(\phi + \xi) - \tau_e) = 0$ for $\xi \in [0, \ext)$ since either $\l_v(\phi + \xi) - \tau_e \geq
    \l_u(\phi)$ (the inflow is part of the current spillback thin flow or even later; in both cases the inflow is
    zero), or $\l_v(\phi + \xi) - \tau_e < \l_u(\phi)$ (the inflow is from earlier than our current spillback thin
    flow). In the later case $e$ is not active for $\zeta$ with $\l_u(\zeta) = \l_v(\phi + \xi) - \tau_e$, since
    \[T_e(\zeta) = \l_u(\zeta) + \tau_e + q_e(\zeta) \geq \l_v(\phi + \xi) > \l_v(\zeta).\] We constructed our flow over
    time on $[0, \phi)$ in such a way that the Nash flow condition is fulfilled for every point in time, and therefore we have
    $f_e^+(\l_v(\phi + \xi) - \tau_e) = f_e^+(\l_u(\zeta))  = 0$ for all $\xi \in [0, \ext)$.
  
  \paragraph{Case 2:} $x'_e >0$ and $e \in E'_\phi \backslash E^*_\phi$ with $\frac{x'_e}{\fac_v \cdot \capo_e} \leq \l'_u$.    
    It follows from (\ref{eqn:l'_v_tight}) that $\l_v' = \l_u'$, and thus $f_e^+(\l_u(\phi + \xi))= \frac{x_e'}{\l_u'} =\frac{x_e'}{\l_v'} =f_e^-(\l_v(\phi + \xi))$ for $\xi \in [0,\ext)$. We obtain
    \begin{align*}
    f_e^+(\l_v(\phi + \xi) - \tau_e) &= f_e^+(\l_v(\phi) - \tau_e + \l'_v \cdot \xi)) \\
    &= f_e^+(\l_u(\phi) + \l'_u \cdot \xi)) \\
    &= f_e^+(\l_u(\phi + \xi))  \\
    &= f_e^-(\l_v(\phi + \xi)).\end{align*}
   This equality yields $z_e(\l_v(\phi + \xi))=z_e(\l_v(\phi)) + \int_{\l_v(\phi)}^{\l_v(\phi+\xi)}
   f_e^+(\zeta - \tau_e) - f_e^-(\zeta) \diff \zeta= 0$. By the case distinction we have \[\b_e^-(\l_v(\phi + \xi)) =
   f_e^+(\l_v(\phi + \xi) - \tau_e) = \frac{x_e'}{\l_u'} \leq \fac_v \cdot \capo_e.\] In conclusion we have
    \[\min \!\Set{\!\b_e^-(\l_v(\phi + \xi)), \fac_v \!\cdot\! \capo_e\!} \!=\! \b_e^-(\l_v(\phi + \xi)) 
    \!=\! f_e^+(\l_v(\phi + \xi) - \tau_e) = f_e^-(\l_v(\phi + \xi)).\]

  \paragraph{Case 3:} $x'_e >0$ and ($e \in E^{*}_{\phi}$ or $e \in E'_\phi \backslash E^*_\phi$ with $\frac{x'_e}{\fac_v \cdot \capo_e} > \l'_u$).  
   It follows from (\ref{eqn:l'_v_tight}) that $\l'_v= \frac{x'_e}{\fac_v \cdot \capo_e}$, and thus $f_e^-(\l_v(\phi+\xi))=
   \frac{x'_e}{\l'_v}= \fac_v \cdot \capo_e$ for $\xi \in [0, \ext)$. It remains to show that $\b_e^-(\l_v(\phi + \xi))\geq
   \fac_v \cdot \capo_e$. For $e \in E^*_{\phi}$ we get from \eqref{eqn:alpha_resetting} that $\l_v(\phi)-\l_u(\phi) +
   \xi \cdot (\l'_v - \l'_u) > \tau_e$ for $\xi \in [0,\ext)$. For $e \in E'_\phi \backslash E^*_\phi$ and $\l'_v =
   \frac{x'_e}{\fac_v \cdot \capo_e} > \l'_u$ it follows that $\l_v(\phi)-\l_u(\phi) = \tau_e$ and $\xi \cdot (\l'_v - \l'_u)
   > 0$ for $\xi \in (0,\ext)$. In both cases we get that $\l_v(\phi + \xi) - \tau_e > \l_u(\phi + \xi)$ for $\xi \in
   (0,\ext)$. It follows with the monotonicity of $F_e^+$ that
   \begin{align*}
   z_e(\l_v(\phi + \xi))&\stackrel{\eqref{eqn:x_well_defined}}{=}F_e^+(\l_v(\phi + \xi)- \tau_e)- F_e^+(\l_u(\phi + \xi))\\
   &\;\geq F_e^+(\l_u(\phi + \xi)+ \varepsilon)- F_e^+(\l_u(\phi + \xi)) \\
   &\;= \varepsilon \cdot \frac{x'_e}{\l'_u} > 0,
   \end{align*}
   where we choose $\varepsilon>0$, such that $\l_u(\phi \!+\! \xi)+ \varepsilon < \min \set{\l_u(\phi \!+\!\ext), \l_v(\phi \!+\!
   \xi) \!-\! \tau_e}$.
   Note that since a flow of $x'_e$ leaves node $u$ there either has to be some inflow of $x'$ into $u$ or $u = s$. In
   both cases we have $\l'_u > 0$, and thus $\l_u(\phi + \xi) < \l_u(\phi +\ext)$ and $\frac{x'_e}{\l'_u}$ is well-defined. Finally, $\b_e^-(\l_v(\phi + \xi)) = \capo_e \geq \fac_v \cdot \capo_e$.

  \paragraph{Inflow condition and no slack condition.} For all $\xi \in [0, \ext)$ we show that $f_e^+(\l_u(\phi + \xi))
  \leq \b_e^+(\l_u(\phi + \xi))$ and that it holds with equality for at least one arc $e \in \delta_v^+$, whenever there is an incoming throttled arc.
  Equation~\eqref{eqn:alpha_storage} ensures that arcs $e \not \in \bar E_\phi$ stay non-full during $[\l_u(\phi),
  \l_u(\phi + \ext))$. Together with \eqref{eqn:alpha_full_arcs} we get that $\b_e^+(\l_u(\phi + \xi)) = \b_e^+$ for all
  $\xi \in [0, \ext)$, and hence \eqref{eqn:l'_v_min_blocked} yields \[f_e^+(\l_u(\phi + \xi)) = \frac{x'_e}{\l'_u} \leq \b_e^+ = \b_e^+(\l_u(\phi + \xi)).\] 
  An incoming throttled arc implies $\fac_u<1$, and thus the inequality
  holds due to \eqref{eqn:l'_v_min_blocked_equal} with equality.

    \paragraph{No deadlock condition.} 
    Suppose there is a point in time $\xi$ when the set of full arcs contain a cycle
    $v_1, \dots, v_k = v_0$. For every $i = 0,1, \dots, k$ we consider the minimal value $\theta_i$ such that $\l_{v_i}(\theta_i) =
    \xi$. By  \Cref{lem:resetting_implies_active}~\ref{it:full_subset_active}
    we have 
    \[\l_{v_i}(\theta_{i-1}) -  \tau_{v_{i-1}v_i} \geq \l_{v_{i-1}}(\theta_{i-1}) = \xi = \l_{v_i}(\theta_i)\] 
    for every $i = 1, \dots, k$, 
    which implies $\theta_{i-1} \geq \theta_i$. Since the sum of transit times in each cycle is strictly
    positive there has to be an~$i$ with $\tau_{v_{i-1}v_i} > 0$, and therefore $\theta_{i-1} > \theta_i$, which leads to a contradiction.
    
   \paragraph{Earliest arrival times.} We show that the extended $\l$-labels fulfill
   \Cref{eqn:bellman}, and therefore describe the earliest arrival times. As shown before we have $\l'_s = 1$
   implying $\l_s(\theta) = \theta$ for all $\theta \in [0, \phi + \ext)$. Considering $v \neq s$, $e = uv \in E$, and
   $\xi \in [0, \ext)$, we distinguish two cases
   and show $\l_v(\phi + \xi) \leq T_e(\l_u(\phi + \xi))$ in the first
   case and $\l_v(\phi + \xi) = T_e(\l_u(\phi + \xi))$ in the second case.
   
   \paragraph{Case 1:} $e \in E\backslash E'_\phi$ or $e \in E'_\phi \backslash E^*_\phi$ with $\l'_v < \l'_u$. \\
     We have for all $\xi \in [0, \ext)$ that
     \[\l_v(\phi + \xi) \leq \l_u(\phi) + \tau_e + 
     \xi \cdot \l'_u \leq T_e(\l_u(\phi)  + \xi \cdot \l'_u) = T_e(\l_u(\phi + \xi)),\]
     where the first inequality follows by \eqref{eqn:alpha_others} for $e \in E\backslash E'_\phi$ or by 
     $\l_v(\phi) = \l_u(\phi) + \tau_e$ and $\l'_v < \l'_u$ otherwise.
       
    \paragraph{Case 2:} $e \in E^*_\phi$ or $e \in E'_\phi \backslash E^*_\phi$ with $\l'_v \geq \l'_u$.\\  
    If $x'_e = 0$ and $e \in E^*_\phi$ we get from \eqref{eqn:l'_v_min} that $\l'_v = \rho_e(\l'_u, x'_e, \fac_v) = 0$.
    Since $e$ is active for $\phi$ it follows that
    \[\l_v(\phi + \xi) = \l_v(\phi) = T_e(\l_u(\phi)) \leq T_e(\l_u(\phi + \xi)).\]
    To show equality note that with \eqref{eqn:alpha_resetting} we have \[q_e(\l_u(\phi + \xi))=T_e(\l_u(\phi +\xi)) - \l_u(\phi + \xi) - \tau_e >
     T_e(\l_u(\phi +\xi)) - \l_v(\phi +\xi) \geq 0.\]
    Thus, \Cref{lem:technical_properties}~\ref{it:q_equiv_z} and \ref{it:no_inflow} together with \[F_e^+(\l_u(\phi+\xi))-F_e^+(\l_u(\phi))= \xi \cdot
    x'_e = 0\]
    imply $T_e(\l_u(\phi)) = T_e(\l_u(\phi + \xi))$.
    
    If we have $x'_e = 0$ and $e \in E'_\phi \backslash E^*_\phi$ with $\l'_v \geq \l'_u$ we obtain \[\l'_v \leq
    \rho_e(\l'_u, x'_e, \fac_v) = \l'_u \leq \l'_v,\] and hence $\l'_v = \l'_u$. This yields
    \[\l_v(\phi + \xi) = \l_u(\phi) + \tau_e + 
    \xi \cdot \l'_u = \l_u(\phi + \xi) + \tau_e = T_e(\l_u(\phi + \xi)),\]
    where the last equality holds since there is no inflow within $(\l_u(\phi), \l_u(\phi + \xi))$, and therefore no
    queue.
    
   Now, suppose that $x'_e > 0$, which implies $\l'_v = \rho_e(\l'_u, x'_e, \fac_v) > 0$.
    For all $\xi \in [0, \ext)$ we have
    \begin{equation}\label{eqn:l_u_plus_tau}\l_u(\phi + \xi) + \tau_e = \l_u(\phi) + \tau_e + \xi \cdot \l'_u 
    \leq \l_v(\phi) + \xi \cdot \l'_v = \l_v(\phi + \xi),\end{equation}
    where the inequality follows either from $\eqref{eqn:alpha_resetting}$ in the case of $e \in E^*_\phi$ or, in the
    other case, by $\l_v(\phi) = \l_u(\phi) + \tau_e$ and $\l'_u \leq \l'_v$. By definition of $q_e$ and $z_e$ we obtain
    that $q_e(\l_u(\phi + \xi))$ is the minimal non-negative value with
    \[F_e^-(\l_u(\phi + \xi) + \tau_e + q_e(\l_u(\phi + \xi))) = F_e^+(\l_u(\phi + \xi)) \stackrel{\eqref{eqn:x_well_defined}}{=} F_e^-(\l_v(\phi + \xi)).\]
    Note that $F_e^-$ is monotone and strictly increasing at $\l_v(\phi + \xi)$ with slope
    $f_e^-(\l_v(\phi + \xi)) = \frac{x'_e}{\l'_v} > 0$. This and \eqref{eqn:l_u_plus_tau} imply that $q_e(\l_u(\phi +
    \xi))$ satisfies
    \[T_e(\l_u(\phi + \xi)) = \l_u(\phi + \xi) + \tau_e + q_e(\l_u(\phi + \xi)) = \l_v(\phi + \xi).\]   
    
    In conclusion, both cases together show that for all $v \in V \backslash \Set{s}$ and all $\xi \in [0, \ext)$ we have
    \[\l_v(\phi + \xi) \leq \min_{e = uv \in E} T_e(\l_u(\phi + \xi)).\]
    In order to show equality recall that \eqref{eqn:l'_v_min} yields an arc $e = uv \in E'$ with $\l'_v =
    \rho_e(\l'_u, x'_e, \fac_v)$. Hence, either $e \in E^*$ or $\l'_v \geq \l'_u$, i.e., $e$ belongs to the second case and there we showed equality.

    \paragraph{Nash flow condition.} Since all conditions are fulfilled, we have a feasible flow over time. The Nash
        flow condition follows immediatly by
    \Cref{lem:x_well-defined} \ref{it:in_equals_out_at_l} and \eqref{eqn:x_well_defined}. Note that by construction, the condition holds for every point in time and not only for
    almost every point in time.
   \end{proof}

\kinematicwavelemma*
\begin{proof}
\begin{enumerate}
\item[\ref{it:storage_condition_wave}] Assume for contradiction that $\dwave_e(\theta)>\sigma_e$ at some point. Since
$\dwave_e$ is continuous and $\dwave_e(0) = 0$ we find an interval $(\theta_0, \theta]$ with $\dwave_e(\theta_0) =
\sigma_e$ and $\dwave_e(\theta) > \sigma_e$ for all $\theta \in (\theta_0, \theta]$. From the inflow condition it
follows that $f_e^+(\theta)\leq g_e(\theta)$ for all $\theta \in [\theta_0, \theta]$. This leads to a
contradiction, since
\begin{align*}0 < \dwave_e(\theta) - \dwave_e(\theta_0) &= \int_{\theta_0}^{\theta} f_e^+(\xi) - f_e^-(\xi) + g_e(\xi + \eta_e) - g_e(\xi) \diff \xi \\
&= \int_{\theta_0}^{\theta} f_e^+(\xi)- g_e(\xi) \diff \xi \leq 0.\end{align*}  

\item[\ref{it:arc_saturation_wave}]
 The inflow condition yields $f_e^+(\theta) \leq \capi_e$ and by definition we have $g_e(\theta) \leq \capo_e$. Hence, \begin{align*}z_e(\theta) &= F_e^+(\theta - \tau_e) - F_e^-(\theta) \\
 &\geq F_e^+(\theta) - \capi_e \cdot \tau_e - F_e^-(\theta) + G_e(\theta) - \capo_e \cdot \eta_e\\ &\stackrel{\eqref{eqn:storage_condition_wave}}{>} \dwave_e (\theta) - \sigma_e \geq 0.\end{align*}
 
\item[\ref{it:arc_saturation_wave_time_shifted}] Again with $f_e^+(\theta) \leq \capi_e$ and the definitions of $z_e$,
$G_e$ and $\dwave_e$ we obtain:
\begin{align*}z_e(\theta - \eta_e) &= F_e^+(\theta - \eta_e - \tau_e) - F_e^-(\theta - \eta_e)\\
&\geq F_e^+(\theta) - \capi_e \cdot (\tau_e + \eta_e) - F_e^-(\theta)+ G_e(\theta)\\ &\stackrel{\eqref{eqn:storage_condition_wave}}{>} \dwave_e (\theta) - \sigma_e \geq 0.\end{align*}

\item[\ref{it:full_arcs_jam_suffix}]
 If $e$ is not throttled at time $\theta - \eta_e$ we are done, since $(e)$ is a finite congestion suffix.  So suppose $e$ is throttled at time $\theta_2 = \theta - \eta_e$ then, by the no
slack condition, there has to be a consecutive arc $e_2$ with $f_{e_2}^+(\theta_2) = \bwave_{e_2}^+(\theta_2)$. If $e_2$ is full at time $\theta_2$ and was throttled at time $\theta_3 \coloneqq \theta_2 - \eta_{e_2}$ we find an arc $e_3$ with $f_{e_3}^+(\theta_3) = \bwave_{e_3}^+(\theta_3)$. We repeat this until we obtain an arc $e_k$ that has $f_{e_k}^+(\theta_k) = \bwave_{e_k}^+(\theta_k)$ but is not full at time $\theta_k$ or not throttled at time $\theta_k - \eta_{e_k}$. To show that such an arc $e_k$ exists, assume for contradiction that there is an infinite sequence $(e_1,
e_2, e_3, \dots)$. Since $E$ is finite, there is some node $v$ which is visited infinitely often. But due to the relaxed no deadlock condition each time we visit $v$ we
consider a point in time which is strictly earlier than the last visit of $v$. Since no arc is full at time $0$ this is a contradiction.

\item[\ref{it:q_well-defined_wave}] The proof is similar to the proof of \Cref{lem:q_well-defined}. We set
\begin{align*}
\capmin&\coloneqq \min \left(\set{ \capi_e, \capo_e | e \in E } \cup \set{1}\right),&& \Sigma \coloneqq \max\set{\sum_{e
\in E} \capo_e, 1}, \\
\etamin &\coloneqq \min \left(\set{\eta_e > 0| e \in E} \cup \set{1}\right) \; \text{ and}\!\!  &&\epsilon(\theta) \coloneqq \left(\frac{\capmin}{\Sigma}\right)^{\abs{E} \cdot \frac{\theta}{\etamin}} \cdot \capmin.
\end{align*}
 If $e$ is not
throttled at time $\theta$, we have $f_e^-(\theta) = \capo_e$, so suppose $e$ is throttled. By the no slack condition there has to be
a consecutive arc $e_1$ with $f_{e_1}^+(\theta) = \bwave_{e_1}^+(\theta)$. Due to \ref{it:full_arcs_jam_suffix} there has to be a congestion suffix $(e_1, e_2, \dots, e_k)$ at
time $\theta_1 = \theta$ where $k \leq \frac{\theta}{\etamin} \cdot \abs{E}$. If
$e_k$ is not full we have $f_{e_k}^+(\theta_k) = \bwave_{e_k}^+(\theta_k) = \capi_{e_k}$. If $e_k$ is full but not
throttled at time $\theta_{k+1} \coloneqq \theta_k - \eta_{e_k}$ we have by \ref{it:arc_saturation_wave_time_shifted}
that $e_k$ had a queue at time $\theta_{k+1}$, and therefore $g_{e_k}(\theta_{k}) = \capo_{e_k}$, which leads to
$f_{e_k}^+(\theta_k) = \bwave_{e_k}^+(\theta_k) = \min\set{\capi_{e_k}, \capo_{e_k}}$. Furthermore, for two consecutive
arcs $e_{i} = uv$ and $e_{i+1} = vw$ we have:
  \begin{equation} \label{eq:lowerbound_on_outflow_wave} \begin{aligned} f_{e_i}^-(\theta_{i+1}) &= \fac_v(\theta_{i+1}) \cdot \capo_{e_i} \\
  &\geq \frac{\sum_{e' \in \delta^+(v)}f_{e'}^+(\theta_{i+1})}{\sum_{e' \in \delta^-(v)}\capo_{e'}} \cdot \capmin \\
  &\geq \frac{f_{e_{i+1}}^+(\theta_{i+1})}{\Sigma}\cdot \capmin. \end{aligned}\end{equation}
  Since the arc $e_i$ is full at time $\theta_i$ with exhausted inflow capacity it holds that
  $f_{e_i}^+(\theta_i)=\bwave^+_{e_i}(\theta_i) = \min\set{f_{e_i}^-(\theta_{i+1}), \capo_{e_i}}$. Recursive application
  of \eqref{eq:lowerbound_on_outflow_wave} along the sequence gives $f_e^-(\theta)\geq
  \left(\frac{\capmin}{\Sigma}\right)^k \cdot \capmin \geq \epsilon(\theta)$.   
With \ref{it:arc_saturation_wave_time_shifted} we have $\bwave_e^+(\theta)
  = \min\set{f_e^-(\theta - \eta_e), \capi_e} \geq \epsilon(\theta - \eta_e) \geq \epsilon(\theta)$ for full $e$.
\end{enumerate}
\end{proof}


%
%



\end{document}